\def\wt{{\operatorname{wt}}}
\def\WE{{\operatorname{W}}}
\def\CW{{\operatorname{CW}}}
\def\SE{{\operatorname{SE}}}
\def\EW{{\operatorname{EW}}}
\def\Z{{\mathbb Z}}
\def\C{{\mathbb C}}
\def\F{{\mathbb F}}
\newtheorem{thm}{Theorem}
\newtheorem{prop}[thm]{Proposition}
\newtheorem{cor}[thm]{Corollary}
\newtheorem{lemma}[thm]{Lemma}
\title[Higher Weight Enumerators]{MacWilliams Identities for $m$-tuple Weight Enumerators}
\author{Nathan Kaplan}
\address{Department of Mathematics, Yale University, New Haven, CT 06511}
\email{nathan.kaplan@yale.edu}
\date{\today}
\subjclass[2010]{94B05}
\keywords{MacWilliams Theorem, Linear Code, Weight Enumerator, Codes over Rings}
\begin{document}
\maketitle

\begin{abstract}
Since MacWilliams proved the original identity relating the Hamming weight enumerator of a linear code to the  weight enumerator of its dual code there have been many different generalizations, leading to the development of $m$-tuple support enumerators.  We prove a generalization of theorems of Britz and of Ray-Chaudhuri and Siap, which build on earlier work of Kl\o ve, Shiromoto, Wan, and others.  We then give illustrations of these $m$-tuple weight enumerators.
\end{abstract}

In a 1963 article \cite{MacW}, MacWilliams gave an identity relating the weight enumerator of a linear code to the weight enumerator of its dual code.  Several authors have generalized this work in a few different directions.  One type of generalization leads to weight enumerators in more than two variables, such as the Lee and complete weight enumerators, and to weight enumerators for codes defined over alphabets other than $\F_q$.  For example, a MacWilliams theorem for codes over Galois rings was given by Wan \cite{Wan}.  Another type of generalization considered by several authors is to adapt the notion of weight to consider more than one codeword at a time.  This leads to the generalized Hamming weights of Wei \cite{Wei}, and to the MacWilliams type results for $m$-tuple support enumerators of Kl\o ve \cite{Klove}, Shiromoto \cite{Shiromoto}, Simonis \cite{Simonis}, and Ray-Chaudhuri and Siap \cite{RCS1, RCS2}.  Barg \cite{Barg}, and later Britz \cite{Britz, BritzEx}, generalized some of these results and gave matroid-theoretic proofs.  Britz \cite{BritzNew} also recently described new and broad connections between weight enumerators and Tutte polynomials of matroids.

We prove a MacWilliams type result that implies the two main theorems of Britz \cite{Britz}, which concern support weight enumerators of codes and in turn imply the earlier results of Kl\o ve \cite{Klove}, Shiromoto \cite{Shiromoto}, and Barg \cite{Barg}.  Our result also implies the main theorems of Ray-Chaudhuri and Siap \cite{RCS1, RCS2} giving MacWilliams theorems for complete weight enumerators of an $m$-tuple of codes $C_1, C_2,\ldots, C_m$ that are not necessarily the same.  As in \cite{RCS2}, we phrase our results in terms of codes over Galois rings instead of restricting ourselves to codes over fields.  One key feature of our result is that not only can the codes $C_1, C_2,\ldots, C_m$ be distinct, but they do not necessarily have to be defined over the same ring, a generalization suggested in Siap's thesis \cite{SiapT}.  This is not the first MacWilliams theorem for $m$-tuples of codes defined over different alphabets.  In \cite{BritzEx}, Britz gives such a result for codes defined over finite fields that are not necessarily the same, but there is an additional constraint that the codes must have the same vector matroid.  This result is phrased in terms of code structure families, a direction we will not pursue here.

We then mention some of the ways in which $m$-tuple support enumerators are used in the theory of linear codes and give some applications.

\section{Statement of Results}
We first give the necessary definitions to state MacWilliams' original theorem \cite{MacW}.  Let $\F_q$ be a finite field of $q$ elements, $N$ a nonnegative integer, and $C \subseteq \F_q^N$ a linear code.  Let $|C|$ denote the number of codewords of $C$, and let $\langle a,b\rangle$ denote the usual pairing on $\F_q^N$.  The \emph{Hamming weight} of any $f\in \F_q^N$, denoted $\wt(f)$, is the number of nonzero coordinates of $f$.  We define the \emph{Hamming weight enumerator} of $C$,
\[ 
\WE_C(X,Y) = \sum_{c\in C} X^{N-\wt(c)} Y^{\wt(c)},
\]
a homogeneous polynomial of degree $N$.

\begin{thm}[MacWilliams]
Let $C \subseteq \F_q^N$ be a linear code and let $C^\perp$ be its dual code. Then
\[\WE_{C^\perp}(X,Y) = \frac{1}{|C|} \WE_C(X+(q-1)Y, X-Y).\]
\end{thm}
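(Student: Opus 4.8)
The plan is to prove the MacWilliams identity via the standard character-sum / Fourier-analytic approach, which is the cleanest route and generalizes most readily to the $m$-tuple and Galois ring settings developed later in the paper. The key idea is to introduce a weight-tracking function on $\F_q^N$ and to relate its sum over the dual code to its sum over the full space, exploiting the orthogonality of additive characters.

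First I would fix a nontrivial additive character $\chi$ of $\F_q$, and for each vector $c \in \F_q^N$ define the two-variable function $g(c) = X^{N - \wt(c)} Y^{\wt(c)}$. The central tool is the following orthogonality relation: summing $\chi(\langle a, b\rangle)$ over all $b \in C$ yields $|C|$ when $a \in C^\perp$ and $0$ otherwise. This lets me write the indicator of $C^\perp$ as an average of characters, so that
\[
\WE_{C^\perp}(X,Y) = \sum_{a \in C^\perp} g(a) = \frac{1}{|C|} \sum_{a \in \F_q^N} g(a) \sum_{b \in C} \chi(\langle a,b\rangle).
\]
Interchanging the order of summation, the right-hand side becomes $\frac{1}{|C|} \sum_{b \in C} \hat{g}(b)$, where $\hat{g}(b) = \sum_{a \in \F_q^N} g(a)\, \chi(\langle a,b\rangle)$ is the discrete Fourier transform of $g$.

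The heart of the computation is then to evaluate $\hat{g}(b)$ explicitly. Because $g(a) = \prod_{i=1}^N h(a_i)$ factors as a product over coordinates, where $h(t) = X$ if $t = 0$ and $h(t) = Y$ if $t \neq 0$, and because $\chi(\langle a, b\rangle) = \prod_{i=1}^N \chi(a_i b_i)$ also factors, the transform $\hat{g}(b)$ factors as a product $\prod_{i=1}^N \left( \sum_{t \in \F_q} h(t)\, \chi(t b_i) \right)$ of single-coordinate character sums. For a coordinate with $b_i = 0$ this inner sum is $X + (q-1)Y$, while for $b_i \neq 0$ the character sum $\sum_{t \neq 0} \chi(t b_i) = -1$ gives $X - Y$. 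Collecting these factors according to whether each coordinate is zero, one finds $\hat{g}(b) = (X+(q-1)Y)^{N-\wt(b)} (X-Y)^{\wt(b)}$, and summing over $b \in C$ reproduces exactly $\WE_C(X+(q-1)Y,\, X-Y)$, which completes the proof.

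I expect the main obstacle to be the careful bookkeeping in the single-coordinate character sum for the nonzero case: verifying that $\sum_{t \in \F_q} h(t)\chi(tb_i) = X + Y\sum_{t \neq 0}\chi(tb_i) = X - Y$ relies on the identity $\sum_{t \in \F_q} \chi(tb_i) = 0$ for $b_i \neq 0$, which in turn is where the nontriviality of $\chi$ is essential. Everything else is a formal manipulation, but I would take care to justify the interchange of summation (legitimate since all sums are finite) and to state the orthogonality lemma for characters precisely, since it is exactly this lemma that will need to be replaced by its Galois ring analogue when the argument is generalized in the sequel.
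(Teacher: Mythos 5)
Your proof is correct, and it follows essentially the same approach as the paper: although the paper cites this classical theorem without reproving it, its proof of the general Theorem \ref{CompW1} is precisely this character-orthogonality and Poisson-summation argument with coordinate-wise factorization of the Fourier transform, and your argument is the $m=1$, finite-field specialization of it. The only organizational difference is that the paper isolates the orthogonality step as a standalone discrete Poisson summation statement (Lemma \ref{DPS}, resting on Lemma \ref{Wchar}), whereas you fold that step directly into the main computation.
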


Many authors have considered not only the weights of individual codewords, but weights coming from $m$-tuples of codewords.  We give some terminology from \cite{Simonis}.  We will usually denote codewords with superscripts when we are considering more than one since we will use subscripts to denote the coordinates of a codeword.

Let $[N]$ denote $\{1,\ldots, N\}$.  For $v = (v_1,\ldots, v_N) \in \F_q^N$, we define the \emph{support} of $v$ by $S(v) = \{e \in [N]\ \mid \ v_e\neq 0\}$.  Note that $\wt(v) = |S(v)|$.  If we consider a codeword $c$ as a $1 \times N$ row vector then $\wt(c)$ is the number of nonzero columns of this matrix.  We define the \emph{weight}, sometimes called the \emph{effective length}, of an $m$-tuple of vectors $(v^1,\ldots, v^m) \in (\F_q^N)^m$ as the number of nonzero columns of the $m \times N$ matrix with rows $v^1,\ldots, v^m$.  This is the size of the union of the supports of $v^1,\ldots, v^m$.  For such an $m$-tuple $(v^1,\ldots, v^m)$ we define its \emph{support}, $S(v^1,\ldots, v^m) = \bigcup_{i=1}^m S(v^i)$.  For a subspace $V$ of $\F_q^N$ we define its support as $S(V) = \bigcup_{v\in V} S(v)$.  Note that $S(V)$ is the union of the supports of any set of vectors generating $V$.  We define the \emph{weight} of $V$ as the size of this support.

We begin with the simplest generalization of the Hamming weight enumerator that considers multiple codewords at the same time.  Let $C_1,\ldots, C_m$ be linear codes over $\F_q^N$ and let $\underline{C} = C_1\times \cdots \times C_m$.  We define the $m$-tuple Hamming weight enumerator by
\[
W^{[m]}_{C_1,\ldots, C_m}(X,Y) = \sum_{(c^1,\ldots, c^m) \in \underline{C}} f(c^1,\ldots, c^m),
\] 
where if the $m$-tuple of vectors $(c^1,\ldots, c^m)$ has effective length equal to $r$, then $f(c^1,\ldots, c^m) = X^{N-r} Y^r$.  The main result of this paper implies a version of the MacWilliams theorem for this weight enumerator.

We now give one of the main theorems of \cite{Britz}.  For consistency we state this as an identity involving homogeneous polynomials, which is different from, but equivalent to, the original presentation.  For $E \subseteq [N]$, let $A_{E}^{[m]}$ denote the number of ordered $m$-tuples of codewords in $C$ whose support is $E$.  We also define $2N$ variables $X_1,\ldots, X_N, Y_1,\ldots, Y_N$ that indicate whether a certain position is in the support of a given $m$-tuple of codewords.  We define the \emph{$m$-tuple support enumerator} of a linear code $C$ of length $N$ as
\begin{eqnarray*}
\SE^{[m]}_C(X_1,\ldots, X_N, Y_1,\ldots, Y_N) & = &   \sum_{E \subseteq [N]} A_{E}^{[m]} \prod_{i \in [N] \setminus E} X_i \prod_{j \in E} Y_j \\
& = & \sum_{(c^1,\ldots, c^m) \in C^m} H(c^1,\ldots, c^m),
\end{eqnarray*}
where $H(c^1,\ldots, c^m) = \prod_{P=1}^N H_P(c^1,\ldots, c^m)$, and
\[H_P(c^1,\ldots, c^m) = \begin{cases} X_P \text{ if } (c_P^1,\ldots, c_P^m) = (0,\ldots, 0) \\
Y_P \text{ otherwise} \end{cases}.\]

\begin{thm}[Britz]\label{Britz}
Let $C \subseteq \F_q^N$ be a linear code and let $C^\perp$ be its dual code. Then
\begin{eqnarray*}
& & \SE^{[m]}_{C^\perp}(X_1,\ldots, X_N, Y_1, \ldots Y_N)  =  \\  
& &  \frac{1}{|C|^m} \SE^{[m]}_C(X_1 +(q^m-1)Y_1,\ldots, X_N + (q^m-1) Y_N, X_1-Y_1,\ldots, X_N-Y_N).
\end{eqnarray*}
\end{thm}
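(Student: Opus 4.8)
The plan is to recognize the $m$-tuple support enumerator as an ordinary, position-indexed weight enumerator over the enlarged alphabet $A = \F_q^m$, and then to apply the character-theoretic form of the MacWilliams identity over $A$. Concretely, an $m$-tuple $(c^1,\ldots,c^m) \in (\F_q^N)^m$ carries exactly the same data as a single word $a \in A^N$ whose $P$-th letter is the column $a_P = (c^1_P,\ldots,c^m_P) \in \F_q^m$. Under this identification the factor $H_P$ depends only on $a_P$: writing $f_P(a_P) = X_P$ if $a_P = 0$ and $f_P(a_P) = Y_P$ otherwise, we have $H = \prod_P f_P(a_P)$, so
\[
\SE^{[m]}_{C^\perp} = \sum_{a \in D} \prod_{P=1}^N f_P(a_P),
\]
where $D \subseteq A^N$ is the set of words whose associated $m$-tuple has every row in $C^\perp$; that is, $D = (C^\perp)^m$ read column-wise.

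The first key step is to identify the dual of $D$. I equip $A = \F_q^m$ with the standard nondegenerate pairing and $A^N$ with the sum-of-letters pairing $\sum_P \langle a_P, b_P\rangle$. A direct computation shows $\sum_P \langle a_P, b_P \rangle = \sum_{i=1}^m \langle c^i, d^i\rangle$, where $b \in A^N$ corresponds to the $m$-tuple $(d^1,\ldots,d^m)$. Since the $c^i$ range independently over $C^\perp$, this pairing vanishes for all $a \in D$ exactly when each $d^i \in (C^\perp)^\perp = C$. Hence $D^\perp = C^m$ read column-wise, and in particular $|D^\perp| = |C|^m$.

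The second step is the MacWilliams/Poisson summation identity over $A$: fixing a nontrivial additive character $\psi$ of $\F_q$ and setting $\widehat{f_P}(b_P) = \sum_{a \in \F_q^m} f_P(a)\,\psi(\langle a, b_P\rangle)$, the usual orthogonality argument gives
\[
\sum_{a \in D} \prod_{P} f_P(a_P) = \frac{1}{|D^\perp|} \sum_{b \in D^\perp} \prod_P \widehat{f_P}(b_P).
\]
It then remains to evaluate the two possible local transforms. When $b_P = 0$ every character value is $1$, so $\widehat{f_P}(0) = X_P + (q^m - 1)Y_P$, the $q^m-1$ counting the nonzero elements of $\F_q^m$. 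When $b_P \neq 0$, nondegeneracy makes $a \mapsto \psi(\langle a, b_P\rangle)$ a nontrivial character of $\F_q^m$, which sums to zero over all of $\F_q^m$; isolating the $a = 0$ term gives $\sum_{a \neq 0}\psi(\langle a, b_P\rangle) = -1$, so $\widehat{f_P}(b_P) = X_P - Y_P$.

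Assembling these, $\widehat{f_P}(b_P)$ equals $X_P + (q^m-1)Y_P$ precisely when the $P$-th column of $(d^1,\ldots,d^m)$ is zero and equals $X_P - Y_P$ otherwise, which is exactly the effect of the substitution $X_P \mapsto X_P + (q^m-1)Y_P$, $Y_P \mapsto X_P - Y_P$ in the definition of $\SE^{[m]}_C$. Combined with $D^\perp = C^m$ and $|D^\perp| = |C|^m$, this yields the claimed identity. I expect the main obstacle to be the bookkeeping in the first two steps, namely setting up the column-wise alphabet $\F_q^m$ so that the support function becomes a genuine product of local functions and verifying $D^\perp = C^m$; once that framework is in place the character computation is entirely routine.
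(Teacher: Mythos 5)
Your proof is correct, and its analytic core --- discrete Poisson summation combined with the two local character sums that produce $X_P + (q^m-1)Y_P$ at a zero column and $X_P - Y_P$ at a nonzero column --- is the same engine the paper uses. The packaging, however, is genuinely different. The paper never proves Theorem \ref{Britz} directly: it first establishes the more general Theorem \ref{CompW1}, an exact weight enumerator identity over Galois rings for $m$ possibly distinct codes with a separate variable for every position and every possible column, via a Fourier transform on $\underline{C} = C_1 \times \cdots \times C_m$ (Lemmas \ref{Wchar} and \ref{DPS}); Britz's theorem then falls out by specializing $X_{P,(z_0,\ldots,z_0)} = X_P$ and all other $X_{P,(z_{i_1},\ldots,z_{i_m})} = Y_P$ and evaluating the transformed variables with Lemma \ref{Char2}. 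You instead prove the statement in one pass: you read the $m$-tuple column-wise as a single word over the alphabet $A = \F_q^m$, verify that $D = (C^\perp)^m$ has dual $D^\perp = C^m$ under the summed standard pairing (your independence argument, fixing $i$ and setting the other rows to zero, together with $(C^\perp)^\perp = C$, is exactly what is needed there), and run Poisson summation with the already-specialized local functions $f_P$. This column-wise framing is precisely the ``additive code over $\F_{q^m}$'' viewpoint that the paper mentions at the end of Section 1 and explicitly declines to pursue. What your route buys is a shorter, self-contained proof of this one theorem; what the paper's route buys is the general identity, from which Theorem \ref{CWRCS2} (Ray-Chaudhuri--Siap) and Theorem \ref{HamW} also follow by other variable specializations, and which accommodates distinct codes over distinct Galois rings, a setting your alphabet-$\F_q^m$ trick does not directly capture when the rings differ. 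Your factorization $\widehat{f}(b) = \prod_P \widehat{f_P}(b_P)$ is the same exchange of sum and product that appears at the end of the paper's proof of Theorem \ref{CompW1}, so nothing essential is missing; just make sure to state explicitly the biduality $(D^\perp)^\perp = D$ (or run Poisson summation in the direction that avoids it), since that is the one standard fact your ``usual orthogonality argument'' silently invokes.
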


In this theorem the supports of $m$-tuples of codewords of $C$ are related to the supports of $m$-tuples of codewords of $C^\perp$.  This support enumerator keeps track of the supports, not just their sizes.  However, given an $m$-tuple of codewords $c^1,\ldots, c^m$ written as an $m \times N$ matrix, this weight enumerator tells us only about the positions of the nonzero columns, not what these columns are.

We next give some terminology related to codes over Galois rings and complete weight enumerators necessary to state Theorem 2.4 of \cite{RCS2}, the other main result that we generalize.  A finite Galois ring $R$ of characteristic $p^e$ and cardinality $p^{et}$ is isomorphic to $\Z_{p^e}[\xi]$, where $\xi$ is a root of an irreducible monic polynomial of degree $t$ over $\Z_{p^e}$.  We write Galois rings in this form.  We note that if $e=1$ then this ring has no zero divisors and is isomorphic to the finite field $\F_{p^t}$.  A code $C$ of length $n$ over $R$ is a submodule of $R^N$ and its elements are codewords.  There is a pairing $\langle c_1, c_2 \rangle$ for elements of $R^N$ just as there is for elements of $\F_q^N$, and there is an analogous definition of $C^\perp$.  We note that $\left(C^\perp\right)^\perp = C$.

Every element of $R$ can be written in terms of a particularly nice basis.  Let $s = p^{et}-1$ and $\{0 = z_0, z_1,\ldots, z_s\}$ be some enumeration of the elements of $R$.  Any $\beta \in R$ can be written in a unique way as
\[
\beta = \gamma_0 + \gamma_1 \xi + \gamma_2 \xi^2 +\cdots + \gamma_{t-1} \xi^{t-1},
\]
with $\gamma_0, \ldots, \gamma_{t-1} \in \Z_{p^e}$.  We define a character $\chi:\ R\rightarrow \C^*$ by 
\begin{equation}\label{chi}
\chi\left(\gamma_0 + \gamma_1 \xi + \gamma_2 \xi^2 +\cdots + \gamma_{t-1} \xi^{t-1}\right) = \zeta^{\gamma_0},
\end{equation}
where $\zeta$ is a $p^e$-th complex root of unity.  We restrict to this class of Galois rings rather than the more general class of Frobenius rings, because in this setting the additive characters of $R$ can be understood in this very concrete way.  For MacWilliams theorems over more general finite rings, see \cite{Wood1, Wood2}.

We next define the complete weight enumerator of a linear code $C \subseteq R^N$.  We give definitions for codes over Galois rings which can easily be specialized to the case where $R=\F_q$ is a finite field.  The complete weight enumerator of a code $C \subseteq R^N$ is a homogeneous polynomial in $p^{et}$ variables, $X_{z_0}, X_{z_1},\ldots, X_{z_s}$, one for each element of $R$.  

For $c = (c_1,\ldots, c_N) \in R^N$, we define $F(c) = \prod_{j=1}^{N} F'(c_j)$, where $F'(c_j) = X_{z_i}$ if $c_j = z_i$.  So, $F(c) = \prod_{i=0}^{s} X_{z_i}^{a_i(c)}$, where $a_i(c)$ is the number of $j \in [N]$ such that $c_j = z_i$. The \emph{complete weight enumerator} of $C$ is
\[
\CW_C(X_{z_0},\ldots, X_{z_{s}}) = \sum_{c \in C} F(c).
\]

The following MacWilliams Theorem for the complete weight enumerator of a code over a Galois ring is proven by Wan \cite{Wan}.
\begin{thm}{\label{Wan}}
Let $C\subset R^N$ be a linear code and let $\chi$ be defined as in equation (\ref{chi}). Then
\begin{eqnarray*}
& & \CW_{C^\perp}(X_{z_0},\ldots, X_{z_{s}}) =\\
& & \frac{1}{|C|}  \CW_C\left( \sum_{i=0}^s \chi(z_0 z_i) X_{z_i}, \sum_{i=0}^s \chi(z_1 z_i) X_{z_i},\ldots, \sum_{i=0}^s \chi(z_s z_i) X_{z_i} \right).
\end{eqnarray*}
\end{thm}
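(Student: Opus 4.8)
The plan is to deduce this from the character-theoretic (Poisson summation) formulation of MacWilliams duality for functions on a finite abelian group, the transform being built from the explicit character $\chi$. First I would record the following general identity. Regarding $R^N$ as a finite abelian group, for any function $f\colon R^N \to \C$ (or into a $\C$-vector space, such as the polynomial ring in the variables $X_{z_i}$) define its Fourier transform with respect to $\chi$ by
\[
\hat{f}(u) = \sum_{v\in R^N} \chi\left(\langle u,v\rangle\right) f(v).
\]
The duality lemma I would prove states that for the submodule $C$ and its dual $C^\perp$,
\[
\sum_{u\in C^\perp} f(u) = \frac{1}{|C|}\sum_{v\in C}\hat{f}(v).
\]
This follows in the standard way: expand $\hat{f}$ on the right, interchange the order of summation, and use the orthogonality relation that $\sum_{v\in C}\chi(\langle u,v\rangle)$ equals $|C|$ when $u\in C^\perp$ and vanishes otherwise.

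Next I would apply the lemma to $f=F$. Since $F$ factors over coordinates as $F(v)=\prod_{j=1}^N F'(v_j)$ and $\chi(\langle u,v\rangle)=\prod_{j=1}^N \chi(u_j v_j)$, the sum defining $\hat{F}$ separates into a product of single-coordinate sums,
\[
\hat{F}(u) = \prod_{j=1}^N \sum_{v_j\in R}\chi(u_j v_j)F'(v_j) = \prod_{j=1}^N \left(\sum_{i=0}^s \chi(u_j z_i)X_{z_i}\right),
\]
using that $F'(v_j)=X_{v_j}$ and that $v_j$ ranges over all of $R=\{z_0,\ldots,z_s\}$. Summing this over $v\in C$ and comparing with the definition of $\CW_C$, one sees that $\sum_{v\in C}\hat{F}(v)$ is precisely $\CW_C$ evaluated at the substituted variables $X_{z_k}\mapsto \sum_{i=0}^s \chi(z_k z_i)X_{z_i}$, which is the right-hand side of the theorem. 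Meanwhile the left-hand side $\sum_{u\in C^\perp}F(u)$ is exactly $\CW_{C^\perp}$, so the identity in the lemma becomes the assertion of the theorem.

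The step I expect to be the main obstacle is the orthogonality relation, that is, the vanishing of $\sum_{v\in C}\chi(\langle u,v\rangle)$ for $u\notin C^\perp$. This holds only if $\chi$ is a \emph{generating} character of $R$, equivalently if the pairing $(u,v)\mapsto \chi(\langle u,v\rangle)$ is nondegenerate and the annihilator of the module $C$ coincides with the code-theoretic dual $C^\perp$. This is the precise point at which the restriction to Galois rings, and the concrete definition of $\chi$ in \eqref{chi}, is used: one must verify that $\ker\chi$ contains no nonzero ideal of $R$, which since $R$ is a finite local ring reduces to checking that $\chi$ is nontrivial on the (unique minimal) socle ideal. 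Once this is confirmed, the orthogonality relation and hence the whole argument go through.
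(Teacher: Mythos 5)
Your proposal is correct and takes essentially the same approach as the paper: the paper attributes this statement to Wan but proves its generalization, Theorem \ref{CompW1}, by exactly your argument --- a polynomial-valued Fourier transform, discrete Poisson summation relating sums over $C$ and $C^\perp$ (Lemma \ref{DPS}), the orthogonality relation (Lemma \ref{Wchar}), and coordinate-wise factorization of the transform --- of which your proof is the $m=1$ case. The only difference is that you sketch a proof of the orthogonality relation via the generating-character (socle) property of $\chi$, a point the paper handles by importing Lemma \ref{Wchar} from Wan without proof.
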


We also define the $m$-tuple complete weight enumerator of $C_1, \ldots, C_m$ where each $C_i \subseteq R_i^N$ and $R_i$ is a Galois ring with elements $\{0 = z^i_0, z^i_1,\ldots, z^i_{s_i}\}$.  Let $\underline{C} = C_1 \times \cdots \times C_m$ and $\underline{R} = R_1^N \times \cdots \times R_m^N$.

Suppose $c^i \in C_i$ for each $i \in [m]$.  For any $m$-tuple $(c^1,\ldots, c^m)$, we consider the $m \times N$ matrix with rows $c^1,\ldots, c^m$.  We define one variable for each of the $\prod_{i=1}^m (s_i +1)$ column vectors and write them:
\[
X_{(z^1_0,z^2_0,\ldots, z^m_0)}, X_{(z^1_0,\ldots, z^{m-1}_0, z^m_1)}, \ldots, X_{(z^1_0,\ldots, z^{m-1}_0, z^m_{s_m})},X_{(z^1_0,z^2_0,\ldots, ,z^{m-1}_1, z^m_0)}, \ldots, X_{(z^1_{s_1},z^2_{s_2},\ldots,z^m_{s_m})}.
\]
When we have one variable for each possible $m$-tuple we always order them lexicographically.

Let $a_{(i_1,\ldots, i_m)}(c^1,\ldots, c^m)$ be the number of columns of this matrix that are equal to $(z^1_{i_1}, \ldots, z^m_{i_m})$.  For now, we are not concerned with the positions of the columns equal to a fixed $m$-tuple, only the number of such columns.  We define
\[
F(c^1,\ldots, c^m) = \prod X_{(z^1_{i_1},\ldots, z^m_{i_m})}^{a_{(i_1,\ldots, i_m)}(c^1,\ldots, c^m)},
\] 
where the product is taken over all $(i_1,\ldots, i_m)$ satisfying $0 \le i_j \le s_j$ for each $j \in [m]$.  As a product over coordinates this is equal to 
\[
\prod_{j=1}^N F'(c_j^1,\ldots, c_j^m) \text{, where }
F'(c_j^1,\ldots, c_j^m) = X_{(z^1_{i_1}, z^2_{i_2},\ldots, z^m_{i_m})},
\]
if $(c_j^1,\ldots, c_j^m) = (z^1_{i_1},\ldots, z^m_{i_m})$.  We now define the \emph{$m$-tuple complete weight enumerator} of $C_1,\ldots, C_m$ as
\[
\CW^{[m]}_{C_1,\ldots, C_m}(X_{(z^1_0,\ldots, z^m_0)},\ldots, X_{(z^1_{s_1}, \ldots, z^m_{s_m})}) = \sum_{(c^1,\ldots, c^m) \in \underline{C}} F(c^1,\ldots, c^m).
\]

An $m$-tuple MacWilliams theorem for complete weight enumerators of codes over the same Galois ring $R$ is the main result of \cite{RCS2}. 
\begin{thm}{\label{CWRCS2}}
Let $\chi$ be as defined in $(\ref{chi})$ and $C_1,\ldots, C_m$ be linear codes defined over $R^N$.  We have
\begin{eqnarray*}
& & \CW^{[m]}_{C^\perp_1,\ldots, C^\perp_m}(X_{(z_0,\ldots, z_0)},\ldots, X_{(z_{s_1}, \ldots, z_{s_m})}) =\\
& & \frac{1}{\prod_{i=1}^m |C_i|} \CW^{[m]}_{C_1,\ldots, C_m}(Y_{(z_0,\ldots, z_0)},\ldots, Y_{(z_{s_1}, \ldots, z_{s_m})}),
\end{eqnarray*}
where
\[
Y_{(z_{i_1},\ldots, z_{i_m})} =  
\sum_{j_1 =0}^{s_1} \cdots \sum_{j_m =0}^{s_m} \left(\prod_{k=1}^m \chi(z_{j_k} z_{i_k})\right) X_{(z_{j_1},\ldots, z_{j_m})}.
\]
\end{thm}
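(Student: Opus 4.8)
The plan is to reduce the $m$-tuple statement to a single application of Poisson summation over a finite abelian group, namely the same character-theoretic engine that proves Wan's Theorem \ref{Wan}, but applied not to $R$ itself but to the product group $A = R^m$. The organizing idea is a transpose dictionary: given $(c^1,\ldots,c^m) \in \underline{C}$, read the $m \times N$ matrix with rows $c^1,\ldots,c^m$ column by column, so that its $j$-th column $\mathbf{c}_j = (c^1_j,\ldots,c^m_j)$ is an element of $A = R^m$. This identifies $(R^N)^m$ with $A^N$ and carries $\underline{C} = C_1 \times \cdots \times C_m$ to an additive subgroup $\widetilde{C} \subseteq A^N$. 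Under this identification the defining monomial $F(c^1,\ldots,c^m) = \prod_{j=1}^N X_{\mathbf{c}_j}$ is exactly the complete weight enumerator monomial of $\widetilde{C}$ when its variables are indexed by the elements of $A$, so that $\CW^{[m]}_{C_1,\ldots,C_m}$ becomes an ordinary complete weight enumerator of the single code $\widetilde{C}$ over $A$.

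First I would set up the character theory on $A = R^m$. Because $\chi$ from (\ref{chi}) is a generating character of $R$, i.e. the map sending $b \in R$ to the character $r \mapsto \chi(br)$ is an isomorphism $R \cong \widehat{R}$, the assignment $\mathbf{b} \mapsto \psi_{\mathbf{b}}$ with $\psi_{\mathbf{b}}(\mathbf{a}) = \prod_{k=1}^m \chi(a_k b_k)$ is an isomorphism $A \cong \widehat{A}$; thus $\psi$ is a generating character for $A$ with respect to the coordinatewise dot product $\langle \mathbf{a},\mathbf{b}\rangle_A = \sum_{k=1}^m a_k b_k \in R$. Next I would match the duals. Equipping $A^N$ with the pairing $\sum_{j=1}^N \langle x_j, y_j\rangle_A$, the transpose dictionary sends this pairing to $\sum_{k=1}^m \langle c^k, d^k\rangle$, so the total character $\prod_{j} \psi_{y_j}(x_j)$ equals $\chi\!\left(\sum_{k=1}^m \langle c^k, d^k\rangle\right)$. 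Isolating one factor at a time and using that $\chi$ is generating on $R$ then shows that the character annihilator $\widetilde{C}^{\,\perp}$ coincides with the transpose of $C_1^\perp \times \cdots \times C_m^\perp$, and that these character annihilators agree with the module duals defined in the paper.

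With the dictionary in place, I would apply the finite-abelian-group MacWilliams identity $\sum_{x \in H^\perp} f(x) = \frac{1}{|H|} \sum_{x \in H} \widehat{f}(x)$ to the group $A^N$, the subgroup $H = \widetilde{C}$, and the product function $f(\mathbf{x}) = \prod_{j=1}^N F'(x_j)$, where $F'$ sends the letter $\mathbf{a} \in A$ to the variable $X_{\mathbf{a}}$. Since $f$ is a product over coordinates, so is its Fourier transform, and the one-letter transform is precisely $\widehat{F'}(\mathbf{a}) = \sum_{\mathbf{b} \in A} \psi_{\mathbf{a}}(\mathbf{b}) X_{\mathbf{b}} = Y_{(z_{i_1},\ldots,z_{i_m})}$ for $\mathbf{a} = (z_{i_1},\ldots,z_{i_m})$, reproducing the stated substitution (note $\psi_{\mathbf{a}}(\mathbf{b}) = \psi_{\mathbf{b}}(\mathbf{a})$, so the two orderings of the product agree). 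Finally, since $|\widetilde{C}| = |\underline{C}| = \prod_{i=1}^m |C_i|$, translating the resulting identity back through the transpose yields the claimed relation with normalizing factor $\tfrac{1}{\prod_i |C_i|}$.

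I expect the main obstacle to be bookkeeping rather than depth: the real work is in the dictionary of the second paragraph — simultaneously translating the weight enumerator, the duality, and the character substitution through the transpose $(R^N)^m \cong (R^m)^N$ — and in verifying carefully that $\psi$ is a generating character of $R^m$, so that the per-coordinate substitutions of Wan's theorem assemble into the single joint substitution $Y_{(z_{i_1},\ldots,z_{i_m})}$. Once these identifications are pinned down, the Poisson summation step is formal. I would also note that nothing in the argument forces the factors to be equal: replacing $A = R^m$ by $R_1 \times \cdots \times R_m$ and each $\chi$ by the generating character $\chi_k$ of $R_k$ gives the version over distinct, and even distinct-ring, alphabets, with Theorem \ref{CWRCS2} recovered as the special case $R_1 = \cdots = R_m = R$.
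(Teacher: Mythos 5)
Your proof is correct, but it is organized differently from the paper's. The paper does not prove Theorem \ref{CWRCS2} from scratch: it first proves the stronger, position-refined Theorem \ref{CompW1} for the $m$-tuple \emph{exact} weight enumerator (via the discrete Poisson summation of Lemma \ref{DPS}, whose vanishing step is Wan's character Lemma \ref{Wchar}), and then obtains Theorem \ref{CWRCS2} by the trivial specialization $X_{P,(z^1_{i_1},\ldots,z^m_{i_m})} = X_{(z^1_{i_1},\ldots,z^m_{i_m})}$ for every position $P$. You instead prove Theorem \ref{CWRCS2} directly: your transpose dictionary $(R^N)^m \cong (R^m)^N$ repackages the $m$-tuple of codes as a single additive code $\widetilde{C}$ over the product group $A = R^m$, identifies $\CW^{[m]}$ with an ordinary complete weight enumerator over the alphabet $A$, and then applies one-code Poisson summation with the generating character $\psi$. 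The analytic engine is therefore the same in both arguments --- product characters, coordinatewise factorization of the Fourier transform, and the generating-character property --- but your reduction is a genuinely different structural move, and it formalizes a remark the paper itself makes only in passing (that $m$-tuples of codewords form an additive code over a larger alphabet). What each route buys: the paper's route yields the strictly stronger Theorem \ref{CompW1}, from which Theorem \ref{Britz} also falls out, while your route is more economical for this one statement and makes both the normalization $|\widetilde{C}| = \prod_i |C_i|$ and the distinct-rings generalization ($A = R_1 \times \cdots \times R_m$) transparent --- matching the paper's observation that its proof of Theorem \ref{CWRCS2} really gives the distinct-rings version first, with equal rings as a specialization. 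The one step you should expand in a final write-up is the claim that the character annihilator of $\widetilde{C}$ is the transpose of $C_1^\perp \times \cdots \times C_m^\perp$: after isolating the $k$-th factor, you need that $\chi(\langle c, d \rangle) = 1$ for all $c \in C_k$ forces $d \in C_k^\perp$, which holds because the image $\{\langle c, d\rangle : c \in C_k\}$ is an ideal of $R$ and a generating character is nontrivial on every nonzero ideal (equivalently, it follows from Lemma \ref{Wchar} since the character sum over $C_k$ would otherwise vanish); this is the same fact the paper uses, but it is the only point where your sketch is brisker than it should be.
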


We now define a support analogue of the $m$-tuple complete weight enumerator of linear codes $C_1,\ldots, C_m$.  The idea is to consider all possible $m$-tuples of codewords and to keep track of which of the possible column vectors occurs in each of the $N$ positions.  This is a homogeneous polynomial in $N \left(\prod_{j=1}^m (s_j+1)\right)$ variables $X_{P, (z^1_{i_1},\ldots, z^m_{i_m})}$ where $P \in [N]$ and for each $j \in [m],\ 0 \le i_j \le s_j$.

Suppose $(c^1,\ldots, c^m) \in \underline{C}$ with $c^i = (c^i_1,\ldots, c^i_N)$. Consider the $m \times N$ matrix with rows $c^1,\ldots, c^m$.  Let 
\[
G(c^1,\ldots, c^m) = \prod_{P = 1}^N G_P(c^1_P,\ldots, c^m_P), 
\]
where we define
\[
G_P(c_P^1,\ldots, c_P^m) = X_{P,(z^1_{i_1},\ldots, z^m_{i_m})},
\]
for $(c^1_P,\ldots, c^m_P) = (z^1_{i_1},\ldots, z^m_{i_m})$.

We now define the \emph{$m$-tuple exact weight enumerator} of $C_1,\ldots, C_m$,
\begin{eqnarray*}
& & \EW^{[m]}_{C_1,\ldots, C_m}(X_{1,(z^1_0,\ldots, z^m_0)},\ldots, X_{1,(z^1_{s_1},\ldots, z^m_{s_m})},\ldots, X_{N,(z^1_0,\ldots, z^m_0)},\ldots, X_{N,(z^1_{s_1},\ldots, z^m_{s_m})})\\
& &  = \sum_{(c^1,\ldots, c^m) \in \underline{C}} G(c^1,\ldots, c^m).
\end{eqnarray*}

For $m=1$ this weight enumerator coincides with the exact weight enumerator in the book of MacWilliams and Sloane \cite{MS}.  We note that the $m$-tuple exact weight enumerator contains strictly more information than the $m$-tuple complete weight enumerator since it keeps track not only of how many times each of the $\prod_{i=1}^m (s_i + 1)$ possible columns occurs, but also in what positions they occur.  It is clear that this weight enumerator completely specifies the words of each code $C_1,\ldots, C_m$.

Our main result is the following generalization of Theorems \ref{Britz} and \ref{CWRCS2}.
\begin{thm}\label{CompW1}
Let $C_1,\ldots, C_m$ be linear codes of length $N$ over Galois rings $R_1,\ldots, R_m$, with dual codes $C_1^\perp,\ldots, C_m^\perp$.  For each $i\in [m]$, let $\chi_i$ be a character on $R_i$ defined as in (\ref{chi}). Then
\begin{eqnarray*}
& & \EW^{[m]}_{C^\perp_1,\ldots, C^\perp_m}(X_{1,(z^1_0,\ldots,z^m_0)},\ldots, X_{1,(z^1_{s_1},\ldots, z^m_{s_m})},\ldots, X_{N,(z^1_0,\ldots, z^m_0)},\ldots, X_{N,(z^1_{s_1},\ldots, z^m_{s_m})})  = \\ 
& & \frac{1}{\prod_{i = 1}^m |C_i|} \EW^{[m]}_{C_1,\ldots, C_m}(Y_{1,(z^1_0,\ldots,z^m_0)},\ldots, Y_{1,(z^1_{s_1},\ldots,z^m_{s_m})},\ldots,Y_{N,(z^1_0,\ldots,z^m_0)},\ldots, Y_{N,(z^1_{s_1},\ldots,z^m_{s_m})}), 
\end{eqnarray*}
where for each $(\alpha^1, \ldots, \alpha^m)\in \underline{R}$ and $\alpha_P = (\alpha_P^1, \ldots, \alpha_P^m)$,
\[
Y_{P,(\alpha_P^1,\ldots, \alpha_P^m)} 
=
\sum_{\beta = (\beta^1,\ldots, \beta^m) \in R_1\times \cdots \times R_m} \left(\prod_{k=1}^m \chi_k(\alpha^k_P  \beta^k)\right)  X_{P,(\beta^1,\ldots, \beta^m)}.
\]
\end{thm}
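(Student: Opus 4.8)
The plan is to recognize this identity as an instance of Poisson summation over a finite abelian group, applied to the additive group $G = R_1^N \times \cdots \times R_m^N$, which I reorganize by \emph{columns} as $G \cong A^N$, where $A = R_1 \times \cdots \times R_m$. Under this reindexing the product $\underline{C} = C_1 \times \cdots \times C_m$ becomes a subgroup $H \leq A^N$, and the monomial map $G(c^1,\ldots,c^m) = \prod_{P=1}^N X_{P,(c^1_P,\ldots,c^m_P)}$ defining the exact weight enumerator becomes a function $f\colon A^N \to \C[X_{P,\beta}]$ that factors as a product over the $N$ coordinates of $A^N$. Summing $f$ over $H$ recovers $\EW^{[m]}_{C_1,\ldots,C_m}$, while summing it over the annihilator $H^\perp$ recovers $\EW^{[m]}_{C^\perp_1,\ldots,C^\perp_m}$, so the theorem follows once I match $H^\perp$ with the product of duals and compute the Fourier transform of $f$.

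First I would pin down the characters. For $\beta = (\beta^1,\ldots,\beta^m) \in A$ define $\psi_\beta\colon A \to \C^*$ by $\psi_\beta(\alpha) = \prod_{k=1}^m \chi_k(\beta^k \alpha^k)$, and extend these multiplicatively to $A^N$ by $\psi_\gamma(\alpha) = \prod_{P=1}^N \psi_{\gamma_P}(\alpha_P)$. The key lemma I need is that each $\chi_k$ is a \emph{generating character} of $R_k$, i.e.\ that $\chi_k$ is nontrivial on every nonzero ideal of $R_k$; for a Galois ring $R_k = \Z_{p^e}[\xi]$ the ideals form the chain $(p^0) \supset (p) \supset \cdots \supset (p^e) = (0)$, and a short computation with the formula (\ref{chi}) shows $\chi_k$ is already nontrivial on the minimal ideal $(p^{e-1})$, hence on all of them. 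This yields the two facts used below: the assignment $\beta \mapsto \psi_\beta$ is an isomorphism from $A$ onto its character group, and for an $R_k$-submodule $C_k$ one has $\chi_k(\langle \gamma^k, \alpha^k\rangle) = 1$ for all $\alpha^k \in C_k$ if and only if $\gamma^k \in C_k^\perp$ (the nontrivial direction multiplies $\alpha^k$ by ring elements to produce a nonzero ideal on which $\chi_k$ cannot be constant).

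Next I would compute the Fourier transform $\widehat{f}(\gamma) = \sum_{\alpha \in A^N} \psi_\gamma(\alpha) f(\alpha)$. Because both $f$ and $\psi_\gamma$ factor over positions, the sum splits into a product over positions, giving $\widehat{f}(\gamma) = \prod_{P=1}^N Y_{P,\gamma_P}$ with $Y_{P,\gamma_P} = \sum_{\beta \in A} \left(\prod_{k=1}^m \chi_k(\gamma_P^k \beta^k)\right) X_{P,\beta}$, which is exactly the substitution in the statement; that is, $\widehat{f}$ equals $G$ evaluated at the $Y$-variables. Meanwhile, since $\psi_\gamma(\alpha) = \prod_{k=1}^m \chi_k(\langle \gamma^k, \alpha^k\rangle)$ separates over the $m$ factors, setting all but one $\alpha^k$ to zero and applying the second fact above identifies the annihilator as $H^\perp = C_1^\perp \times \cdots \times C_m^\perp$.

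Finally I would invoke Poisson summation in the form $\sum_{\gamma \in H^\perp} f(\gamma) = \frac{1}{|H|}\sum_{\alpha \in H} \widehat{f}(\alpha)$, which follows from orthogonality: expanding the right side and swapping the order of summation, $\sum_{\alpha \in H}\psi_\alpha(g) = \sum_{\alpha\in H}\psi_g(\alpha)$ equals $|H|$ when $g \in H^\perp$ and $0$ otherwise. The left side is $\EW^{[m]}_{C^\perp_1,\ldots,C^\perp_m}(X)$; the right side is $\frac{1}{\prod_{i=1}^m|C_i|}\EW^{[m]}_{C_1,\ldots,C_m}(Y)$, using $|H| = \prod_{i=1}^m |C_i|$ and the fact that $\widehat{f}(\alpha)$ is $G(\alpha)$ in the $Y$-variables. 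These are the two sides of the theorem. I expect the only genuine obstacle to be the combined bookkeeping of the column-reindexing and the generating-character lemma; once the characters of $A$ are described concretely, the remainder is multiplicativity of $f$ across coordinates together with character orthogonality.
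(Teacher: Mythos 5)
Your proposal is correct and is essentially the paper's own argument: a Fourier transform of the monomial function defining the exact weight enumerator, a discrete Poisson summation identity proved by character orthogonality (the paper's Lemma \ref{DPS}), and a coordinate-wise factorization of the transform that yields exactly the $Y$-substitution in the statement. The only substantive difference is that where the paper cites Wan's Lemma \ref{Wchar} for the vanishing of character sums over a code --- equivalently, your identification of the annihilator of $C_1\times\cdots\times C_m$ with $C_1^\perp\times\cdots\times C_m^\perp$ --- you re-derive that fact from the generating-character property of $\chi_k$ on the chain of ideals of the Galois ring, which makes the character-theoretic input self-contained but does not change the structure of the proof.
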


We use this result to give a proof of the following analogue for $m$-tuple Hamming weight enumerators, which also follows from Theorem 2.1 of \cite{RCS1}.

\begin{thm}\label{HamW}
Let $C_1,\ldots, C_m$ be linear codes over $\F_q^N$, with dual codes $C_1^\perp,\ldots, C_m^\perp$.  Then
\[\WE^{[m]}_{C_1^\perp,\ldots, C_m^\perp}(X,Y) = \frac{1}{\prod_{i = 1}^m |C_i|} \WE^{[m]}_{C_1,\ldots, C_m}(X+(q^m-1)Y, X-Y).\]
\end{thm}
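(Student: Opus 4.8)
The plan is to deduce Theorem \ref{HamW} from Theorem \ref{CompW1} by a specialization of variables, since the $m$-tuple Hamming weight enumerator is a coarsening of the $m$-tuple exact weight enumerator that remembers only whether each column is zero. Over $\F_q$ each ring $R_i$ equals $\F_q$, the Galois ring with $e=1$, so $s_i = q-1$ and the column index $(z^1_{i_1},\ldots,z^m_{i_m})$ runs over $\F_q^m$. First I would set, uniformly in the position $P \in [N]$,
\[
X_{P,(0,\ldots,0)} = X \quad\text{and}\quad X_{P,\alpha} = Y \text{ for every } \alpha = (\alpha^1,\ldots,\alpha^m) \neq (0,\ldots,0).
\]
Under this substitution $G_P$ contributes $X$ exactly when the $P$-th column of the matrix with rows $c^1,\ldots,c^m$ is zero and $Y$ otherwise, so $G(c^1,\ldots,c^m) = X^{N-r}Y^r$ where $r$ is the effective length of the tuple. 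Hence the exact weight enumerator on each side collapses to the corresponding Hamming weight enumerator $\WE^{[m]}$, and it remains only to check that the transformed variables $Y_{P,\alpha}$ of Theorem \ref{CompW1} specialize to the linear forms appearing in Theorem \ref{HamW}.

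The key computation is therefore to evaluate
\[
Y_{P,\alpha} = \sum_{\beta=(\beta^1,\ldots,\beta^m) \in \F_q^m} \left(\prod_{k=1}^m \chi_k(\alpha^k_P \beta^k)\right) X_{P,(\beta^1,\ldots,\beta^m)}
\]
after substitution. Separating the $\beta = 0$ term, whose coefficient is $\prod_k \chi_k(0) = 1$ and whose variable is $X$, from the remaining terms, which all carry the variable $Y$, I would write $Y_{P,\alpha} = X + (S_\alpha - 1)Y$, where $S_\alpha = \sum_{\beta \in \F_q^m}\prod_{k=1}^m \chi_k(\alpha^k_P \beta^k)$. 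Because the coordinates $\beta^1,\ldots,\beta^m$ vary independently, this sum factors as $S_\alpha = \prod_{k=1}^m \bigl(\sum_{\beta^k \in \F_q} \chi_k(\alpha^k_P \beta^k)\bigr)$.

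The main point is then the standard orthogonality relation for the additive character $\chi_k$: if $\alpha^k_P = 0$ every summand equals $1$ and the inner sum is $q$, whereas if $\alpha^k_P \neq 0$ the map $\beta^k \mapsto \alpha^k_P\beta^k$ permutes $\F_q$, so the inner sum equals $\sum_{\gamma \in \F_q}\chi_k(\gamma) = 0$ since $\chi_k$ is nontrivial. Consequently $S_\alpha = q^m$ when the column $\alpha_P = (\alpha^1_P,\ldots,\alpha^m_P)$ is zero and $S_\alpha = 0$ otherwise, giving $Y_{P,(0,\ldots,0)} = X + (q^m-1)Y$ and $Y_{P,\alpha} = X - Y$ for every nonzero column $\alpha_P$. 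Substituting these into the specialized form of Theorem \ref{CompW1} produces exactly the identity of Theorem \ref{HamW}, with the factor $\frac{1}{\prod_{i=1}^m |C_i|}$ carried over unchanged. I expect the only delicate points to be confirming the nontriviality of each $\chi_k$ and justifying the factorization of $S_\alpha$; the rest is direct specialization.
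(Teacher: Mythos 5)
Your proposal is correct and takes essentially the same approach as the paper: specialize the variables of the general MacWilliams identity so that the zero column gives $X$ and every nonzero column gives $Y$, then evaluate the transformed variables by character orthogonality to obtain $X+(q^m-1)Y$ and $X-Y$. The only cosmetic differences are that you specialize Theorem \ref{CompW1} directly rather than passing through the finite-field case of Theorem \ref{CWRCS2} (which is itself just the position-independent specialization of Theorem \ref{CompW1}), and that you factor the character sum $S_\alpha$ into a product of $m$ one-variable orthogonality relations where the paper invokes Lemma \ref{Char2} for the single character $\beta \mapsto \psi(\langle \alpha,\beta\rangle)$ on $\F_q^m$.
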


This result allows one to compare the effective length of \hbox{$m$-tuples} of vectors drawn from different linear codes of the same length, and gives a generalization of an earlier result of Shiromoto \cite{Shiromoto} concerning the effective lengths of $m$-tuples of vectors from the same linear code $C$.

In the final part of the paper we discuss extensions to $r$-th support weight enumerators.  Wei \cite{Wei} first considered the $r$-th generalized Hamming Weight $d_r(C)$, which is the smallest effective length of an $r$-tuple of codewords of $C$ that generate an $r$-dimensional subcode of $C$.  Kl\o ve \cite{Klove} was the first to prove MacWilliams type relations for these effective length distributions.  We first define the $r$-th support weight distribution $\{A_i^{(r)} \mid\ i\ge 0\}$ of $C$ where $A_i^{(r)}$ is the number of $r$-dimensional subspaces of $C$ that have support of size exactly $i$.

We define the \emph{$r$-th support weight enumerator} of a linear code $C$,
\[
\WE_C^{(r)}(X,Y) = \sum_{i=0}^N A_i^{(r)} X^{N-i} Y^i.
\]

Britz \cite{Britz} gave a generalization of this weight enumerator that considers not only the dimension of the subcode but also which of the coordinates in $[N]$ lie in the support of the subcode.  We consider an analogue of this $r$-th support weight enumerator for linear codes of length $N,\ C_1,\ldots, C_m$, not necessarily equal, and see that things do not carry over so neatly in this setting.  We discuss this issue and give some applications of our results.

We can express an $m$-tuple of elements of $\F_q^N$ as the rows of an $m \times N$ matrix.  A column of this matrix gives an $m$-tuple $(\alpha_1,\ldots, \alpha_m) \in \F_q^m$.  If we choose a basis for $\F_{q^m}$ over $\F_q$, we can think of this $m$-tuple as an element of $\F_{q^m}$.  The resulting code over $\F_{q^m}$ is no longer linear since it is not closed under scalar multiplication by elements of $\F_{q^m} \setminus \F_q$, but it is $\F_q$-linear.  Codes of this type are often called \emph{additive} codes.  We can then think of Theorem \ref{HamW} as a kind of MacWilliams theorem for additive codes over $\F_{q^m}$.  We will not pursue this interpretation further here, but it may be useful in future work.  For more on MacWilliams Theorems for additive codes see \cite{Wood2}.

\section{The Proof of Theorem \ref{CompW1}}

We prove Theorem \ref{CompW1} on $m$-tuple exact weight enumerators using an argument similar in spirit to one of the original proofs of the MacWilliams identity \cite{MacW}.  Similar ideas have been used by Britz and others \cite{Britz, Elkies}.  The main difficulty in this argument is giving a careful definition of the Fourier transform along with the proper analogue of discrete Poisson summation.

We recall the function
\[ 
G(c^1,\ldots, c^m) = \prod_{P=1}^N G_P(c^1,\ldots, c^m),
\]
where $G_P$ is defined in the previous section.  This is a function from $\underline{C}$ to an algebra over $\C$.  Let $\underline{C}^\perp = C_1^\perp \times \cdots \times C_m^\perp$.  Recall that on each $R_i^N$ there is a pairing $\langle u^i, v^i\rangle = \sum_{j=1}^N u^i_j v^i_j$, where $u^i = (u^i_1,\ldots, u^i_N)$ and $v^i = (v^i_1,\ldots, v^i_N)$, and that equation (\ref{chi}) defines a character $\chi_i$ on each $R_i$.  We define the Fourier transform of $G$ by 
\[
\widehat{G}(u^1,\ldots, u^m) = \sum_{\underline{v} \in \underline{R}} \left(\prod_{i=1}^m  \chi_i(\langle u^i, v^i\rangle)\right) G(\underline{v}).
\]

We first recall a lemma from \cite{Wan}.
\begin{lemma}\label{Wchar}
Let $C\subset R^n$ be a linear code, $C^\perp$ its dual and $\chi$ be defined as in (\ref{chi}).  Then, for fixed $v\not\in C^\perp$,
\[
\sum_{u\in C} \chi(\langle u,v\rangle) = 0.
\]
\end{lemma}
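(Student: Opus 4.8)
The plan is to reduce the claim to the standard orthogonality relation for characters of a finite abelian group. The map $\psi\colon C \to \C^*$ defined by $\psi(u) = \chi(\langle u, v\rangle)$ is a character of the additive group $(C,+)$: since the pairing is bilinear we have $\langle u + u', v\rangle = \langle u, v\rangle + \langle u', v\rangle$, and $\chi$ is an additive character of $R$ by $(\ref{chi})$, so $\psi(u+u') = \psi(u)\psi(u')$. By orthogonality, $\sum_{u \in C} \psi(u)$ equals $0$ if $\psi$ is nontrivial and $|C|$ if $\psi$ is trivial. Hence everything comes down to showing that $v \notin C^\perp$ forces $\psi$ to be nontrivial, i.e.\ that there is some $u \in C$ with $\chi(\langle u, v\rangle) \neq 1$.

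Since $v \notin C^\perp$, there is some $u_0 \in C$ with $r := \langle u_0, v\rangle \neq 0$. Because $C$ is a submodule of $R^N$, for every $a \in R$ the word $a u_0$ lies in $C$, and $\langle a u_0, v\rangle = a\langle u_0, v\rangle = ar$. Thus it suffices to find $a \in R$ with $\chi(ar) \neq 1$; equivalently, I must show that the additive character $a \mapsto \chi(ar)$ of $R$ is nontrivial whenever $r \neq 0$. Phrased differently, I need the principal ideal $Rr$ not to be contained in $\ker\chi$.

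The heart of the argument---and where the restriction to Galois rings is used---is that $\chi$ is a generating character, meaning $\ker\chi$ contains no nonzero ideal of $R$. I would establish this by recalling that a Galois ring $R = \Z_{p^e}[\xi]$ is a chain ring: its ideals form the single chain $0 \subsetneq (p^{e-1}) \subsetneq \cdots \subsetneq (p) \subsetneq R$, so every nonzero ideal contains the minimal nonzero ideal $(p^{e-1})$. It therefore suffices to check that $(p^{e-1}) \not\subseteq \ker\chi$, and indeed the element $p^{e-1} = p^{e-1} + 0\cdot\xi + \cdots + 0\cdot\xi^{t-1}$ has $\chi(p^{e-1}) = \zeta^{p^{e-1}} \neq 1$, since $\zeta$ is a primitive $p^e$-th root of unity and $0 < p^{e-1} < p^e$. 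Consequently $Rr$, being a nonzero ideal, is not contained in $\ker\chi$, so some $a \in R$ gives $\chi(ar) \neq 1$; taking $u = au_0 \in C$ shows $\psi$ is nontrivial and the sum vanishes.

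I expect the main obstacle to be the generating-character step: one must know (or re-derive) that Galois rings are chain rings, so that ``nonzero ideal'' collapses to ``contains $(p^{e-1})$'', after which the explicit formula for $\chi$ closes the argument immediately. If one prefers to avoid citing chain-ring structure, the same conclusion can be reached by verifying directly that $\beta \mapsto \chi(\beta\, \cdot\,)$ is an injective homomorphism from $R$ into its group of additive characters---a nondegeneracy statement equivalent to $\chi$ being generating---but the chain-ring route is the most economical.
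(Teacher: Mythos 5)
Your proof is correct. Note that the paper itself gives no proof of this lemma---it is recalled from Wan \cite{Wan}---and your argument (reduce to orthogonality of characters on the finite abelian group $C$, then use that $C$ is an $R$-submodule together with the chain-ring structure of $R=\Z_{p^e}[\xi]$, whose nonzero ideals all contain $(p^{e-1})$, to show that $\chi$ is a generating character and hence that $a\mapsto\chi(ar)$ is nontrivial for every $r\neq 0$) is exactly the standard argument underlying Wan's identity, so there is nothing to bridge between your route and the paper's. One small point worth flagging: equation (\ref{chi}) as stated only requires $\zeta$ to be a $p^e$-th complex root of unity, while your key step $\chi(p^{e-1})=\zeta^{p^{e-1}}\neq 1$ requires $\zeta$ to be \emph{primitive}; that is the intended (and necessary) reading, since for non-primitive $\zeta$ the lemma itself is false, and you were right to make the assumption explicit.
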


One of the main tools in our proof is the following version of discrete Poisson summation.
\begin{lemma}{\label{DPS}}
We have
\[
\sum_{\underline{v} \in \underline{C}^\perp} G(\underline{v}) = \frac{1}{\prod_{i=1}^m |C_i|} \sum_{\underline{u} \in \underline{C}} \widehat{G}(\underline{u}).
\]
\end{lemma}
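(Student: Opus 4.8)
The plan is to prove the identity by starting from the right-hand side, unfolding the definition of the Fourier transform, interchanging the (finite) order of summation, and then exploiting the product structure of $\underline{C}$ together with the orthogonality relation of Lemma \ref{Wchar}. Since $G$ takes values in a $\C$-algebra but all the coefficients $\chi_i(\langle u^i, v^i\rangle)$ are scalars, every manipulation below is just a finite $\C$-linear rearrangement, so no convergence or interchange-of-limits issue arises.

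First I would substitute the definition of $\widehat{G}$ into the right-hand side to obtain a double sum over $\underline{u} \in \underline{C}$ and $\underline{v} \in \underline{R}$, and then swap the two summations so that the sum over $\underline{v}$ is outermost. This isolates, for each fixed $\underline{v}$, the inner character sum $\sum_{\underline{u} \in \underline{C}} \prod_{i=1}^m \chi_i(\langle u^i, v^i\rangle)$, with $G(\underline{v})$ pulled out as a common factor.

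The key step is to observe that, because $\underline{C} = C_1 \times \cdots \times C_m$ and the summand factors as a product indexed by $i$, the inner sum splits as $\prod_{i=1}^m \bigl(\sum_{u^i \in C_i} \chi_i(\langle u^i, v^i\rangle)\bigr)$. For each factor I would apply Lemma \ref{Wchar}: if $v^i \notin C_i^\perp$ the factor vanishes, while if $v^i \in C_i^\perp$ then $\langle u^i, v^i\rangle = 0$ for every $u^i \in C_i$ by definition of the dual, so the factor equals $\chi_i(0)\,|C_i| = |C_i|$. Hence the whole product is $0$ unless $v^i \in C_i^\perp$ for all $i$, i.e.\ unless $\underline{v} \in \underline{C}^\perp$, in which case it equals $\prod_{i=1}^m |C_i|$.

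Putting this together, the outer sum over $\underline{v} \in \underline{R}$ collapses to a sum over $\underline{v} \in \underline{C}^\perp$, each surviving term carrying the factor $\prod_{i=1}^m |C_i|$, which cancels exactly against the normalizing constant $1/\prod_{i=1}^m |C_i|$ in front. This leaves $\sum_{\underline{v} \in \underline{C}^\perp} G(\underline{v})$, the left-hand side. The only real subtlety is the complementary case of Lemma \ref{Wchar} --- that a character sum over $C_i$ equals $|C_i|$ when $v^i$ lies in the dual --- which the lemma does not state explicitly but which follows immediately from the definition of $C_i^\perp$; I expect this bookkeeping, rather than any genuine difficulty, to be the main thing to get right.
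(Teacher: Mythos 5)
Your proposal is correct and follows essentially the same route as the paper's own proof: unfold $\widehat{G}$, swap the order of summation, factor the inner character sum as $\prod_{i=1}^m \sum_{u^i \in C_i} \chi_i(\langle u^i, v^i\rangle)$, kill the terms with $\underline{v} \notin \underline{C}^\perp$ via Lemma \ref{Wchar}, and evaluate the surviving terms as $\prod_{i=1}^m |C_i|$. The only cosmetic difference is that the paper splits the sum into the $\underline{C}^\perp$ and non-$\underline{C}^\perp$ parts before interchanging summations, whereas you interchange first and then do the case analysis; the substance, including the observation that the dual-membership case gives $\chi_i(0)\,|C_i| = |C_i|$, is identical.
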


\begin{proof}
We consider
\[
\sum_{\underline{u} \in \underline{C}} \widehat{G}(\underline{u})  =  \sum_{\underline{u} \in \underline{C}} \sum_{\underline{v} \in \underline{R}} \left(\prod_{i=1}^m  \chi_i(\langle u^i, v^i\rangle)\right) G(\underline{v}).
\]

We express this double sum in two parts based on whether $\underline{v}$ is in $\underline{C}^\perp$ or not:
\begin{eqnarray*}
\sum_{\underline{u} \in \underline{C}} \sum_{\underline{v} \in \underline{R}} \left(\prod_{i=1}^m  \chi_i(\langle u^i, v^i\rangle)\right) G(\underline{v}) & = &  \sum_{\underline{u} \in \underline{C}} \sum_{\underline{v} \in \underline{C}^\perp} \left(\prod_{i=1}^m  \chi_i(\langle u^i, v^i\rangle)\right) G(\underline{v}) \\
& + & \sum_{\underline{u} \in \underline{C}} \sum_{\underline{v} \in \underline{R} \setminus \underline{C}^\perp} \left(\prod_{i=1}^m  \chi_i(\langle u^i, v^i\rangle)\right) G(\underline{v}).
\end{eqnarray*}

We switch the order of summation in each of these double sums and consider the second one.  Let $\underline{v} \in \underline{R} \setminus \underline{C}^\perp$ and consider
\[
\sum_{\underline{u} \in \underline{C}} \left(\prod_{i=1}^m  \chi_i(\langle u^i, v^i\rangle)\right) = \prod_{i=1}^m \sum_{u^i \in C_i} \chi_i(\langle u^i, v^i \rangle).
\]
By Lemma \ref{Wchar}, this is zero.  We now see that the first double sum is given by
\[
 \sum_{\underline{v} \in \underline{C}^\perp} \sum_{\underline{u} \in \underline{C}} \left(\prod_{i=1}^m  \chi_i(\langle u^i, v^i\rangle)\right) G(\underline{v}).
\]
We see that for a fixed $\underline{v} \in \underline{C}^\perp$,
\[
\sum_{\underline{u} \in \underline{C}} \left(\prod_{i=1}^m  \chi_i(\langle u^i, v^i\rangle)\right) = \prod_{i=1}^m |C_i|,
\]
completing the proof.

\end{proof}

We now give the proof of Theorem \ref{CompW1}.
\begin{proof}
We sum the function $G$ over all $(c^1,\ldots, c^m) \in \underline{C}$.  This gives
\[ 
\sum_{(c^1,\ldots, c^m) \in \underline{C}} G(c^1,\ldots, c^m) = \EW^{[m]}_{C_1,\ldots, C_m}(X_{1,(z^1_0,\ldots,z^m_0)},\ldots, X_{N,(z^1_{s_1},\ldots,z^m_{s_m})}).
\]
Lemma \ref{DPS} implies that this is equal to 
\[
\frac{1}{\prod_{i=1}^m |C_i^\perp|} \sum_{(d^1,\ldots, d^m) \in \underline{C}^\perp} \widehat{G}(d^1,\ldots, d^m).
\]

We now consider the coordinates of $\widehat{G}(d^1,\ldots, d^m)$ one at a time.  Note that 
\begin{eqnarray*}
\widehat{G}(d^1,\ldots, d^m)  & = & \sum_{(g^1,\ldots, g^m) \in \underline{R}} \prod_{i=1}^m \chi_i(\langle d^i, g^i \rangle) G(g^1,\ldots, g^m) \\
& = & \sum_{(g^1,\ldots, g^m) \in \underline{R}} \prod_{i=1}^m \prod_{P=1}^N  \chi_i(d_P^i  g_P^i) G_P(g^1_P,\ldots, g^m_P),
\end{eqnarray*}
where $g^i = (g^i_1,\ldots, g^i_N)$.

We can switch the order of the sum and product and still account for every $(g^1,\ldots, g^m) \in \underline{R}$ exactly once.  This sum is equal to
\[ 
\prod_{P=1}^N \sum_{(g_P^1,\ldots, g_P^m) \in R_1\times \cdots \times R_m} \prod_{i=1}^m \chi_i (d_P^i, g_P^i) G_P(g^1_P,\ldots, g^m_P).
\]
Let $g_P = (g^1_P,\ldots, g^m_P)$ and $d_P = (d^1_P,\ldots, d^m_P)$.  We can rewrite the previous sum as
\[ 
\prod_{P=1}^N \sum_{g_P \in R_1\times \cdots \times R_m} \left(\prod_{k=1}^m \chi_k( d^k_P, g^k_P)\right) X_{P,(g_P^1,\ldots, g_P^m)},
\]
which completes the proof.
\end{proof}

\section{Applications of Theorem \ref{CompW1} to Other Weight Enumerators}

In this section we deduce Theorem \ref{CWRCS2} and then Theorem \ref{Britz} from Theorem \ref{CompW1}, and then deduce Theorem \ref{HamW} from Theorem \ref{CWRCS2}.

\begin{proof}[Proof of Theorem \ref{CWRCS2}]
For all $P \in [N]$ and all $(i_1,\ldots, i_m)$ with $0\le i_j \le s_j$, set $X_{P,(z^1_{i_1},\ldots, z^m_{i_m})} = X_{(z^1_{i_1},\ldots, z^m_{i_m})}$.  By definition, for any fixed $(i_1,\ldots, i_m)$ the variables $Y_{P,(z^1_{i_1},\ldots, z^m_{i_m})}$ for $P\in [N]$ are all equal.  We also see that for each $P$ we have $G_P(c^1,\ldots, c^m) = X_{(c^1_P,\ldots, c^m_P)}$.  

Therefore 
\[
G(c^1,\ldots, c^m) = \prod X_{(z^1_{i_1},\ldots, z^m_{i_m})}^{a_{(i_1,\ldots, i_m)}(c^1,\ldots, c^m)},
\]  
where the product is taken over all $(i_1,\ldots, i_m)$ satisfying $0 \le i_j \le s_j$ for each $j \in [m]$. Taking the sum over all $m$-tuples $(c^1, \ldots, c^m) \in \underline{C}$ gives the weight enumerator
$\CW^{[m]}_{C_1,\ldots, C_m}(X_{(z^1_0,\ldots,z^m_0)},\ldots, X_{(z^1_{s_1},\ldots, z^m_{s_m})})$.  

The observation that $Y_{P,(z^1_{i_1},\ldots, z^m_{i_m})} = Y_{(z^1_{i_1},\ldots, z^m_{i_m})}$ for all $P$ gives an identity like Theorem \ref{CWRCS2}, except that the $m$ Galois rings $R_i$ can be distinct.  Specializing to the case where each $R_i$ is the same completes the proof.
\end{proof}

Before proving the next result, we first recall a lemma on sums of characters.
\begin{lemma}\label{Char2}
Suppose $\alpha = (\alpha_1,\ldots, \alpha_m) \in \F_q^m \setminus (0,\ldots, 0)$ and $\psi$ is a non-trivial additive character on $\F_q$.  Then 
\[
\sum_{\beta = (\beta_1,\ldots, \beta_m) \in \F_q^m \setminus (0,\ldots, 0)} \psi(\langle \alpha, \beta \rangle) = -1.
\]
\end{lemma}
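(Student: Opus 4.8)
The plan is to reduce the stated sum over nonzero $\beta$ to a sum over all of $\F_q^m$, which can then be evaluated by a standard orthogonality argument. First I would restore the single missing term: since $\langle \alpha, 0\rangle = 0$ and $\psi(0) = 1$, we have
\[
\sum_{\beta \in \F_q^m \setminus (0,\ldots,0)} \psi(\langle \alpha, \beta\rangle) = \left(\sum_{\beta \in \F_q^m} \psi(\langle \alpha, \beta\rangle)\right) - 1.
\]
Thus it suffices to show that the full sum over $\F_q^m$ vanishes.

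To evaluate the full sum I would factor it across coordinates. Writing $\langle \alpha, \beta\rangle = \sum_{i=1}^m \alpha_i \beta_i$ and using that $\psi$ carries sums to products, the sum over $\beta \in \F_q^m$ splits as $\prod_{i=1}^m \sum_{\beta_i \in \F_q} \psi(\alpha_i \beta_i)$. For each index $i$ with $\alpha_i = 0$ the inner sum equals $q$, while for each $i$ with $\alpha_i \neq 0$ the map $\beta_i \mapsto \alpha_i \beta_i$ is a bijection of $\F_q$, so the inner sum equals $\sum_{\gamma \in \F_q} \psi(\gamma) = 0$ because $\psi$ is non-trivial. Since $\alpha \neq (0,\ldots,0)$, at least one coordinate $\alpha_i$ is nonzero, so at least one factor is $0$ and the whole product vanishes. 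Combining this with the previous display gives the value $-1$.

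Alternatively, I could obtain the vanishing of the full sum directly from Lemma \ref{Wchar}: taking $C = \F_q^m$, so that $C^\perp = \{0\}$, and the fixed vector $\alpha \neq (0,\ldots,0)$ lies outside $C^\perp$, Lemma \ref{Wchar} yields $\sum_{\beta \in \F_q^m} \psi(\langle \beta, \alpha\rangle) = 0$, and the pairing is symmetric. Either route works. There is no genuine obstacle here, since the statement is a routine character-orthogonality computation; the only point requiring care is to flag where the non-triviality of $\psi$ enters, namely in forcing $\sum_{\gamma \in \F_q}\psi(\gamma) = 0$, and to phrase the reduction to the full sum cleanly.
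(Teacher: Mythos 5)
Your proof is correct and takes essentially the same route as the paper's: both complete the sum to all of $\F_q^m$ by restoring the $\beta=(0,\ldots,0)$ term, show the full character sum vanishes, and conclude the value $-1$. The only difference is one of granularity: the paper invokes orthogonality for the character $\beta \mapsto \psi(\langle \alpha,\beta\rangle)$ on the group $\F_q^m$ directly (asserting its non-triviality for $\alpha \neq (0,\ldots,0)$), whereas your coordinate-wise factorization $\prod_{i=1}^m \sum_{\beta_i \in \F_q} \psi(\alpha_i \beta_i)$ supplies exactly the verification of that assertion, reducing it to one-dimensional orthogonality.
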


\begin{proof}
The map $\beta  \rightarrow \psi(\langle \alpha, \beta \rangle)$ is a character on the finite additive group $\F_q^m$.  Therefore, the sum of this character over all $\beta$ vanishes unless it is the trivial character, which is the case if and only if $\alpha = (0,\ldots, 0)$.  We see that 
\[
\sum_{\beta = (\beta_1,\ldots, \beta_m) \in \F_q^m \setminus (0,\ldots, 0)} \prod_{i=1}^m \psi(\alpha_i \beta_i) = 0 - \prod_{i=1}^m \psi(0) = -1 .
\]
\end{proof}

\begin{proof}[Proof of Theorem \ref{Britz}]
We suppose that each $R_i$ is the same finite field $\F_q$, and that for each $i\in [m],\ C_i = C_1$.  For convenience we write $C := C_1$ and let $\{0=z_0,z_1,\ldots, z_{q-1}\}$ be some enumeration of the elements of $\F_q$.  Let $\psi$ be a non-trivial additive character on $\F_q$.  For each $P\in [1,N]$ set $X_{P,(z_0,\ldots,z_0)} = X_P$ and for all other $m$-tuples $(i_1,\ldots, i_m)$, set $X_{P,(z_{i_1},\ldots, z_{i_m})} = Y_{P}$.

First consider 
\[
Y_{P,(z_0,\ldots,z_0)} = \sum_{\beta = (\beta_1,\ldots, \beta_m) \in \F_q^m} X_{P,(\beta_1,\ldots, \beta_m)}.
\]
This is equal to $X_P + (q^m-1) Y_P$.  

Suppose $\alpha_P = (\alpha_P^1,\ldots, \alpha_P^m) \neq (0,\ldots, 0)$ and consider
\[
Y_{P,(\alpha_P^1,\ldots, \alpha_P^m)} = \sum_{\beta = (\beta_1,\ldots, \beta_m) \in \F_q^m} \psi(\langle \alpha_P, \beta \rangle) X_{P,(\beta_1,\ldots, \beta_m)}.
\]
In this case, the map that takes $\beta \in \F_q^m$ to $\psi(\langle \beta, \alpha_P \rangle)$ is a non-trivial character.  From the $\beta = (0,\ldots, 0)$ term we get $X_P$ and from the other terms we get 
\[
Y_P \sum_{\beta \neq (0,\ldots, 0)} \psi(\langle \alpha_P,\beta \rangle) = -Y_P,
\]
by the above lemma.  Therefore, $Y_{P,(\alpha_P^1,\ldots, \alpha_P^m)} = X_P-Y_P$.

Collecting terms completes the proof.
\end{proof}

Finally, we use the finite field version of Theorem \ref{CWRCS2} to prove Theorem \ref{HamW}. 
\begin{proof}[Proof of Theorem \ref{HamW}]
First suppose that the Galois ring $R$ is the finite field $\F_q$.  For an $m$-tuple $(i_1,\ldots, i_m)$ satisfying $0 \le i_1,\ldots, i_m \le q-1$ and $(i_1,\ldots, i_m) \neq (0,\ldots, 0)$ set $X_{(z_{i_1},\ldots, z_{i_m})}$ equal to $Y$, and set $X_{(z_0,\ldots,z_0)} = X$.  We note that 
\[
Y_{(z_0,\ldots,z_0)} = \sum_{(z_{i_1},\ldots,z_{i_m}) \in \F_q^m} X_{(z_{i_1},\ldots, z_{i_m})} = X + (q^m-1) Y.
\]
Consider $\alpha = (\alpha_1,\ldots, \alpha_m)\in \F_q^m$ with $\alpha\neq (0,\ldots, 0)$.  By Lemma \ref{Char2}, we have
\[
Y_{(\alpha_1,\ldots, \alpha_m)} = X + \sum_{\beta = (\beta_1,\ldots, \beta_m) \neq (0,\ldots, 0)} \psi(\langle \alpha, \beta\rangle)Y = X - Y.
\]
We note that $a_{(0,\ldots, 0)}(c^1,\ldots, c^m)$ counts the number of occurrences of the zero column in the matrix with rows $c^1,\ldots, c^m$.  Collecting terms completes the proof.

\end{proof}

\section{Support Weight Enumerators and Applications}

Several authors have studied weight enumerators from $m$-tuples of codewords from a single linear code $C$ where these $m$-tuples are grouped by the dimension of the subcode that they generate. The main fact that allows one to adapt the MacWilliams theorem for $m$-tuple support enumerators to give information about only $m$-tuples of codewords of $C$ that span a subspace of dimension $r$ is the following classical result.

\begin{prop}\label{CountB}
Let $D$ be an $r$-dimensional subspace of $\F_q^N$.  The number of ordered $m$-tuples of vectors $(d^1,\ldots, d^m) \in D^m$ that span $D$ is independent of $D$.  It is equal to $[m]_r := \prod_{i=0}^{r-1} (q^m-q^i)$.
\end{prop}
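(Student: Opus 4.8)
The plan is to reduce the count to a standard matrix-rank enumeration. First I would observe that the quantity in question depends only on $r$, and not on the particular subspace $D$: any $r$-dimensional subspace $D \subseteq \F_q^N$ is linearly isomorphic to $\F_q^r$, and a linear isomorphism $\varphi\colon D \to \F_q^r$ carries a spanning $m$-tuple to a spanning $m$-tuple and is a bijection on $m$-tuples (with inverse induced by $\varphi^{-1}$). Hence the number of spanning ordered $m$-tuples of $D$ equals that of $\F_q^r$. This simultaneously establishes the independence claim and reduces everything to counting the ordered $m$-tuples $(d^1,\ldots, d^m) \in (\F_q^r)^m$ that span $\F_q^r$.

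Next I would set up the key dictionary between tuples and matrices. I encode an $m$-tuple $(d^1,\ldots, d^m)$ of column vectors in $\F_q^r$ as the $r \times m$ matrix $M$ whose columns are $d^1,\ldots, d^m$; this is plainly a bijection between $(\F_q^r)^m$ and the set of $r \times m$ matrices over $\F_q$. The tuple spans $\F_q^r$ exactly when the column space of $M$ is all of $\F_q^r$, i.e.\ when $M$ has rank $r$. The crucial step, and the one I expect to need the most care, is to rephrase this full-column-rank condition as a condition on rows: since column rank equals row rank, $M$ has rank $r$ if and only if its $r$ rows are linearly independent as vectors of $\F_q^m$. In particular this forces $r \le m$, which is consistent with the fact that the product $[m]_r$ vanishes for $r > m$ (the factor $q^m - q^i$ is $0$ when $i = m$), matching the impossibility of spanning in that range.

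Finally I would count the $r \times m$ matrices whose rows form a linearly independent set by building them one row at a time. The first row may be any nonzero vector in $\F_q^m$, giving $q^m - 1 = q^m - q^0$ choices; having chosen linearly independent rows $1,\ldots, i-1$, whose span is an $(i-1)$-dimensional subspace with $q^{i-1}$ elements, the $i$-th row must lie outside this span, giving $q^m - q^{i-1}$ choices. Multiplying over $i = 1,\ldots, r$ yields $\prod_{i=1}^{r}(q^m - q^{i-1}) = \prod_{i=0}^{r-1}(q^m - q^i) = [m]_r$, as claimed. The one point to check carefully here is that the number of available choices at each stage is genuinely independent of which specific rows were selected earlier; this holds because the count $q^m - q^{i-1}$ of forbidden vectors depends only on the dimension $i-1$ of the span already constructed, so the multiplicative counting is valid.
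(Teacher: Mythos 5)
Your proof is correct and complete. The paper itself gives no proof of Proposition \ref{CountB} --- it is stated as a classical result --- so there is no argument of the paper to compare against; what you give is the standard proof of this classical fact. Each step is sound: the reduction to $D = \F_q^r$ via a linear isomorphism (which both proves the independence claim and normalizes the problem), the encoding of an $m$-tuple as an $r \times m$ matrix whose spanning condition is full rank, the passage from column rank to row rank, and the row-by-row count $\prod_{i=1}^{r}(q^m - q^{i-1}) = [m]_r$, where you correctly justify the multiplicative counting by noting that the number of choices at stage $i$ depends only on the dimension $i-1$ of the span already built. Your side remark that $[m]_r$ vanishes precisely when $r > m$, matching the impossibility of spanning in that range, is a nice consistency check.
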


Let $C$ be a linear code of length $N$ and dimension $k$ over $\F_q$.  It is now an elementary observation that 
\[W^{[m]}_C(X,Y) = \sum_{r=0}^k [m]_r W^{(r)}_C(X,Y).\]
Applying the MacWilliams theorem to this weight enumerator gives the following result originally due to Kl\o ve \cite{Klove}.
\begin{prop}[Kl\o ve]\label{Klo}
Let $C$ be a linear code of length $N$ and dimension $k$ over $\F_q$.  Then for any $m\ge 1$,
\[ \sum_{r=0}^{N-k} [m]_r W^{(r)}_{C^\perp}(X,Y) = \frac{1}{q^{km}} \sum_{r=0}^k [m]_r W^{(r)}_{C}(X+(q^m-1)Y,X-Y).\]
\end{prop}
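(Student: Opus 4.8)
The plan is to obtain Kl\o ve's identity as a direct consequence of the $m$-tuple Hamming weight MacWilliams relation of Theorem \ref{HamW}, specialized to the diagonal case in which all $m$ codes coincide, together with the grouping of $m$-tuples by the subcode they span. The key structural input is the elementary observation recorded just above, namely that $W^{[m]}_C(X,Y) = \sum_{r=0}^k [m]_r W^{(r)}_C(X,Y)$, which rests on Proposition \ref{CountB}: each ordered $m$-tuple of codewords spans a unique subcode $D$, the effective length of the tuple equals the size of the support $S(D)$, and exactly $[m]_r$ ordered $m$-tuples span any fixed $r$-dimensional $D$.

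First I would set $C_1 = \cdots = C_m = C$ in Theorem \ref{HamW}. Since $\dim C = k$ we have $|C_i| = q^k$ for every $i$, so $\prod_{i=1}^m |C_i| = q^{km}$, and with the single-code abbreviation $W^{[m]}_C = \WE^{[m]}_{C,\ldots,C}$ the theorem reads
\[
W^{[m]}_{C^\perp}(X,Y) = \frac{1}{q^{km}}\, W^{[m]}_{C}\bigl(X+(q^m-1)Y,\; X-Y\bigr).
\]
Next I would expand each side using the elementary observation. On the left I apply it to $C^\perp$, whose dimension is $N-k$, so the sum runs over $0 \le r \le N-k$ and yields $\sum_{r=0}^{N-k} [m]_r W^{(r)}_{C^\perp}(X,Y)$. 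On the right I apply it to $C$ and then carry the substitutions $X \mapsto X+(q^m-1)Y$ and $Y \mapsto X-Y$ into each summand, obtaining $\frac{1}{q^{km}} \sum_{r=0}^k [m]_r W^{(r)}_C\bigl(X+(q^m-1)Y,\, X-Y\bigr)$. Equating the two expressions gives precisely the stated identity.

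Since every ingredient is already available, I do not expect a genuine obstacle; the only points needing care are bookkeeping the index ranges --- in particular that the left-hand sum terminates at $N-k = \dim C^\perp$ rather than at $k$ --- and observing that the linear substitution in Theorem \ref{HamW} commutes with the finite sum defining the expansion, which holds because each $W^{(r)}_C$ is an honest polynomial in $X$ and $Y$.
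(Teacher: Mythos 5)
Your proposal is correct and is essentially the paper's own argument: the paper likewise combines the elementary decomposition $W^{[m]}_C(X,Y) = \sum_{r=0}^k [m]_r W^{(r)}_C(X,Y)$ (via Proposition \ref{CountB}) with the $m$-tuple MacWilliams identity applied to $C$ and $C^\perp$. Your added care about the index range $N-k$ on the dual side and the compatibility of the substitution with the finite sum matches what the paper leaves implicit.
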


Adapting this result for $m$-tuples of words from different codes is not so straightforward.  Suppose we have linear codes $C_1,\ldots, C_m$ that are not necessarily the same and want to consider only $m$-tuples of codewords $(c^1,\ldots, c^m) \in \underline{C}$ that span a particular $r$-dimensional subspace $D$ of $\F_q^N$.  It is no longer the case that the number of $m$-tuples spanning $D$ depends only on $r$.  For example, if we choose a one-dimensional space $D$, the number of $m$-tuples spanning $D$ depends on the number of $C_i$ that contain $D$.  In general, for a particular space, in order to know the number of $m$-tuples of codewords that span it, we must know the dimension of the intersection of this space with each of the codes $C_i$.

We next consider one of the simplest examples with unequal codes.  We will see that the analogue of Proposition \ref{Klo} is much more complicated.  Let $C_1$ and $C_2$ be distinct linear codes over $\F_q$ of the same length $N$.  Suppose that $C_1$ has dimension $k,\ C_2$ has dimension $l$, and $C_1 \cap C_2$ has dimension $s$.  For each subspace of the code generated by $C_1$ and $C_2$ that is spanned by some pair $(c^1, c^2)$ with $c^1\in C_1$ and $c^2\in C_2$, we can ask for the number of such pairs of codewords that span this subspace.  We see that only the pair $((0,\ldots,0),(0,\ldots, 0))$ spans the zero-dimensional subspace consisting only of the zero codeword.

We first consider one-dimensional spaces.  Suppose we have a one-dimensional subspace of $C_1\cap C_2$.  By Proposition \ref{CountB}, this is generated by $[2]_1 = q^2-1$ pairs.  A one-dimensional subspace of $C_1$ that does not lie in $C_1\cap C_2$ must have a zero-dimensional intersection with it, so can only be generated by a pair of the form $(c^1,0)$ where $c^1$ lies in the subspace.  There are $q-1$ nonzero vectors in a one-dimensional subspace of $\F_q^N$.  A similar statement holds for one-dimensional subspaces of $C_2$ that do not lie in $C_1\cap C_2$.  Adding these up gives 
\[
(q-1)W^{(1)}_{C_1}(X,Y) + (q-1) W^{(1)}_{C_2} + (q-1)^2 W^{(1)}_{C_1\cap C_2}(X,Y),
\]
since we have taken $2(q-1)$ of the pairs of vectors generating subspaces in $C_1 \cap C_2$ and $q^2-1-2(q-1) = (q-1)^2$.

We next consider two-dimensional subspaces of the code generated by $C_1$ and $C_2$.  We note that $C_1 \setminus \{C_1\cap C_2\} = C_1 \setminus C_2$.

\begin{prop}
Let $C_1$ and $C_2$ be linear codes over $\F_q$ of length $N$ and dimensions $k$ and $l$, respectively.  Suppose that $C_1 \cap C_2$ has dimension $s$.  Then
\begin{eqnarray*}
W^{[2]}_{C_1,C_2}(X,Y) & = & X^N + (q-1)\left(W^{(1)}_{C_1}(X,Y) + W^{(1)}_{C_2}(X,Y)\right) \\
& + & (q-1)^2 W^{(1)}_{C_1\cap C_2}(X,Y) \\
& + & (q^2-1)(q^2-q) W^{(2)}_{C_1\cap C_2}(X,Y)  \\
& + &  q(q-1)^2 \left(W^{(2)}_{C_1\setminus C_2}(X,Y) + W^{(2)}_{C_2 \setminus C_1}(X,Y)\right) \\
& + &   (q-1)^2 W^{(2)}_{\langle C_1, C_2 \rangle \setminus \{C_1 \cup C_2\}}(X,Y),
 \end{eqnarray*}
where 
\[
W^{(2)}_{C_i \setminus {C_1\cap C_2}}(X,Y) = \sum_{j=0}^N A^{(2)}_j X^{N-j} Y^j,\]
 and $A_j^{(2)}$ denotes the number of two-dimensional subcodes of $C_i$ that have a one-dimensional intersection with ${C_1\cap C_2}$ and weight $j$, and 
 \[
 W^{(2)}_{\langle C_1, C_2 \rangle \setminus \{C_1 \cup C_2\}}(X,Y) = \sum_{j=0}^N B^{(2)}_j X^{N-j} Y^j,
 \]
 where $B^{(2)}_j$ denotes the number of two-dimensional subcodes of the code spanned by $C_1$ and $C_2$ but are not subcodes of either $C_1$ or $C_2$, that have weight $j$.
 \end{prop}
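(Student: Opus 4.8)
The plan is to sort the pairs $(c^1,c^2) \in C_1 \times C_2$ by the subspace $D = \langle c^1, c^2\rangle$ that they span, exactly as in the one- and two-dimensional discussion preceding the statement. The conceptual key is that the effective length of the pair $(c^1,c^2)$ equals the weight of the spanned subspace $D$: since every vector of $D$ is a linear combination of $c^1$ and $c^2$, we have $S(c^1,c^2) = S(c^1) \cup S(c^2) = S(D)$, so the number of nonzero columns of the $2 \times N$ matrix with rows $c^1,c^2$ is exactly $\wt(D)$. Consequently each pair contributes $X^{N-\wt(D)}Y^{\wt(D)}$, and since each pair spans a unique $D$, I would write
\[
W^{[2]}_{C_1,C_2}(X,Y) = \sum_{D} N(D)\, X^{N-\wt(D)} Y^{\wt(D)},
\]
where $N(D)$ denotes the number of pairs in $C_1 \times C_2$ spanning $D$ and the sum runs over subspaces $D \subseteq \langle C_1,C_2\rangle$, necessarily of dimension $0$, $1$, or $2$.

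Next I would compute $N(D)$ case by case, observing that it depends only on $\dim(C_1 \cap D)$ and $\dim(C_2 \cap D)$; the stated formula results from re-summing the weight contributions according to these cases. The dimension-zero case gives only $(0,0)$, contributing $X^N$. For $\dim D = 1$, a spanning pair needs $c^1 \in C_1 \cap D$ and $c^2 \in C_2 \cap D$ not both zero: when $D \subseteq C_1 \cap C_2$ Proposition \ref{CountB} gives $N(D) = [2]_1 = q^2-1$, while when $D$ lies in exactly one of $C_1,C_2$ the other coordinate is forced to be $0$, leaving $q-1$ pairs. For $\dim D = 2$, a spanning pair needs $c^1,c^2$ independent with $c^1 \in C_1 \cap D$ and $c^2 \in C_2 \cap D$: if $D \subseteq C_1 \cap C_2$ then $N(D) = [2]_2 = (q^2-1)(q^2-q)$ by Proposition \ref{CountB}; if $D \subseteq C_1$ with $\dim(C_2 \cap D) = 1$ (and symmetrically for $C_2$), then choosing $c^2$ nonzero in the line $C_2 \cap D$ ($q-1$ ways) and $c^1 \in D$ off that line ($q^2-q$ ways) gives $N(D) = q(q-1)^2$; and if $D$ lies in neither $C_1$ nor $C_2$, then $C_1 \cap D$ and $C_2 \cap D$ are distinct lines, so any nonzero choice on each is automatically independent, giving $N(D) = (q-1)^2$.

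Finally I would collect terms, matching each case to the corresponding support weight enumerator in the statement. The dimension-one contributions combine into $(q-1)(W^{(1)}_{C_1}+W^{(1)}_{C_2}) + (q-1)^2 W^{(1)}_{C_1\cap C_2}$, precisely because a line in $C_1 \cap C_2$ is counted once in each of $W^{(1)}_{C_1}$ and $W^{(1)}_{C_2}$ and the coefficients satisfy $(q-1)+(q-1)+(q-1)^2 = q^2-1 = [2]_1$. The dimension-two contributions assemble into the four displayed terms, where $W^{(2)}_{C_1\setminus C_2}$, $W^{(2)}_{C_2\setminus C_1}$, and $W^{(2)}_{\langle C_1,C_2\rangle \setminus \{C_1 \cup C_2\}}$ are exactly the enumerators of the subspace families identified above. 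The only real subtlety, and the step I expect to demand the most care, is the two-dimensional count: one must verify that a $2$-dimensional $D$ admits a spanning pair only when it meets both codes nontrivially, that the families ``$D \subseteq C_i$ with one-dimensional intersection with $C_1 \cap C_2$'' and ``$D$ in neither code'' are disjoint and exhaust the remaining cases, and that the independence bookkeeping yields exactly $q(q-1)^2$ and $(q-1)^2$ with no pair omitted or double counted.
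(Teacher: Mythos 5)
Your overall route is the same as the paper's: sort the pairs $(c^1,c^2)\in C_1\times C_2$ by the subspace $D$ they span, invoke Proposition \ref{CountB} for the subspaces inside $C_1\cap C_2$, and count spanning pairs directly for the remaining families. But one step of your argument is genuinely false: the claim that if a two-dimensional $D\subseteq\langle C_1,C_2\rangle$ lies in neither $C_1$ nor $C_2$, then $C_1\cap D$ and $C_2\cap D$ are distinct lines and hence $N(D)=(q-1)^2$. Such a $D$ can meet $C_1$ or $C_2$ trivially, or can meet both in one common line inside $C_1\cap C_2$, and in all of those cases $N(D)=0$. For instance, with $C_1=\langle e_1,e_2\rangle$ and $C_2=\langle e_3,e_4\rangle$ in $\F_q^4$, the subspace $D=\langle e_1+e_3,\,e_2+e_4\rangle$ lies in neither code and is spanned by no pair at all; with $C_1=\langle e_1,e_2\rangle$ and $C_2=\langle e_1,e_3\rangle$, the subspace $D=\langle e_1,\,e_2+e_3\rangle$ meets both codes in the same line $\langle e_1\rangle$, and again $N(D)=0$. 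The correct dichotomy is that $N(D)=(q-1)^2$ precisely when $D$ meets each $C_i$ in a line and the two lines are distinct --- equivalently, $D=\langle c^1,c^2\rangle$ with $c^1\in C_1\setminus C_2$ and $c^2\in C_2\setminus C_1$ --- while $N(D)=0$ for every other $D$ not contained in $C_1$ or $C_2$.

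This is not a cosmetic slip, because it interacts with the definition of $B^{(2)}_j$ in the statement, which literally counts \emph{all} two-dimensional subcodes of $\langle C_1,C_2\rangle$ outside $C_1\cup C_2$. Your blanket count $(q-1)^2$ would ``prove'' that literal reading, but that reading is false: in the first example above the left-hand side is a sum of exactly $q^4$ monomials, while the right-hand side would contribute $(q-1)^2$ monomials for each of the $(q^2+1)(q^2+q+1)-2$ two-dimensional subspaces of $\F_q^4$ other than $C_1$ and $C_2$ (all terms involving $C_1\cap C_2$ vanish since $s=0$), which is already too many when $q=2$. So the last family must be restricted to those $D$ admitting a spanning pair, i.e.\ meeting both codes nontrivially in distinct lines, and establishing that restriction is exactly the content your argument asserts away. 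To be fair, the paper's own proof is loose at the same spot --- it describes this family as the subspaces with trivial intersection with $C_1\cap C_2$, which is necessary but not sufficient, and omits the details --- but it stops short of your stronger false implication; in either write-up, characterizing which $D$ outside both codes are actually spanned by a pair is the one point that must be proved with care.
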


\begin{proof}
The number of pairs of vectors generating a two-dimensional subspace of ${C_1\cap C_2}$ is ${[2]_2 = (q^2-1)(q^2-q)}$.  The number of such subspaces is given by ${((q^s-1)(q^s-q))/((q^2-1)(q^2-q))}$.  We next consider two-dimensional subspaces of $C_1$ that are not contained in ${C_1\cap C_2}$.  If such a space can be generated by a pair $(c^1,c^2)$ then ${c^2\in {C_1\cap C_2}}$.  Given such a space, if we first choose $c^2$ there are $q^2-q$ choices for $c^1$, since the space contains $q^2$ total vectors. There are $(q^s-1)/(q-1)$ one-dimensional subspaces of ${C_1\cap C_2}$ and $((q^s-1)(q^s-q))/((q^2-1)(q^2-q))$ two-dimensional subspaces.  There are $((q^k-1)(q^k-q))/((q^2-1)(q^2-q))$ two-dimensional subspaces of $C_1$ each containing $(q^2-1)/(q-1) = q+1$ one-dimensional subspaces.  Therefore, there are 
\[
\frac{(q^k-1)(q^k-q)}{(q^2-1)(q^2-q)}\cdot \frac{(q+1)(q-1)}{q^k-1} = \frac{q^{k-1}-1}{q-1}
\] 
two-dimensional subspaces of $C_1$ containing a given one-dimensional subspace of ${C_1\cap C_2}$.  We see that $(q^{s-1}-1)/(q-1)$ of these are actually two-dimensional subspaces of ${C_1\cap C_2}$.  Therefore, we have 
\[
\frac{q^{k-1}-1 - (q^{s-1}-1)}{q-1}\cdot  \frac{q^s-1}{q-1} = \frac{(q^k-q^s)(q^s-1)}{q(q-1)^2}
\] 
two-dimensional subspaces of $C_1 \setminus C_2$ that can be generated by a pair $(c^1, c^2)$ with $c^1\in C_1,\ c^2 \in C_2$.  For each such space there are $(q^2-q)(q-1)$ pairs generating it, giving a total of $(q^k-q)(q^s-1)$ pairs generating such subspaces.  This is the same as the total number of pairs $c^1 \in C_1\setminus  C_2,\ c^2 \in {C_1\cap C_2}$, giving a useful verification of this count.  We similarly count $(q^l-q^s)(q^s-1)$ pairs of vectors that generate a two-dimensional subspace of $C_2 \setminus C_1$.

Using similar techniques we see that there are $(q^k-q^s)(q^l-q^s)/(q-1)^2$ subspaces of the code generated by $C_1$ and $C_2$ that have trivial intersection with $C_1\cap C_2$, and that each of these is generated by $(q-1)^2$ pairs $(c^1,c^2)$ with $c^i \in C_i$.  We omit the details.
\end{proof}

 We can now apply Theorem \ref{HamW} to this expression and see that this is equal to $\left(|C_1^{\perp}| |C_2^{\perp}|\right)^{-1}$ times the right hand side where each $C_i$ is replaced with $C_i^\perp,\ {C_1\cap C_2}$ is replaced with $C_1^\perp \cap C_2^\perp$ and $(X,Y)$ is replaced with $(X+(q^2-1)Y, X-Y)$.
 
 We give an example in order to make this more concrete.  We give binary codes of length $6,\ C_1$ and $C_2$ in terms of generator matrices,  
 \[C_1 = \begin{pmatrix} 
 1 & 1& 0 & 0 & 0 & 0 \\
 1 & 0& 1 & 0 & 0 & 0 \\
  1 & 1& 1 & 1 & 1 & 1 
  \end{pmatrix},\ \ \ C_2 = \begin{pmatrix}
  1 & 1 & 1 & 0 & 0 & 0 \\ 
    1 & 1& 1 & 1 & 1 & 1 
\end{pmatrix}.\]
We see that $ C_1 \cap C_2$ is the one-dimensional subspace generated by $(1,1,1,1,1,1)$, and that
 \[C_1^\perp = \begin{pmatrix} 
 0 & 0& 0 & 0 & 1 & 1 \\
 0 & 0& 0 & 1 & 0 & 1 \\
  1 & 1& 1 & 1 & 1 & 1 
  \end{pmatrix},\ \ \ C_2^\perp = \begin{pmatrix}
  1 & 1 & 0 & 0 & 0 & 0 \\ 
  1 & 0 & 1 & 0 & 0 & 0 \\ 
  0 & 0 & 0 & 1 & 0 & 1 \\ 
  0 & 0 & 0 & 0 & 1 & 1 
\end{pmatrix},\]
showing that $C_1$ is not self-dual, but is permutation equivalent to its dual.

We compute 
\begin{eqnarray*}
W_{C_1}^{(1)}(X,Y) & = & 3 X^4 Y^2 + 3 X^2 Y^4 + Y^6,\ W_{C_2}^{(1)}(X,Y)  = 2 X^3 Y^3 +  Y^6,\\
 W_{C_1\cap C_2}^{(1)}(X,Y) & = &  Y^6,\ W_{C_1\cap C_2}^{(2)}(X,Y)  =   0,\ W_{C_1 \setminus C_2}^{(2)}(X,Y)  = 3 Y^6,\\
W_{C_2 \setminus C_1}^{(2)}(X,Y)  & = &  Y^6,\ W^{(2)}_{\langle C_1, C_2 \rangle \setminus \{C_1 \cup C_2\}}(X,Y)  =  3 (X^3 Y^3 + X^2 Y^4 + X Y^5 + Y^6). 
\end{eqnarray*}

The above proposition now gives 
\[W^{[2]}_{C_1,C_2}(X,Y) = X^6 + 3 X^4 Y^2 + 5 X^3 Y^3 + 6 X^2 Y^4 + 3 X Y^5 + 14 Y^6.\]
Applying Theorem \ref{HamW} gives
\[W^{[2]}_{C_1^\perp,C_2^\perp}(X,Y) = X^6 + 12 X^4 Y^2 + 6 X^3 Y^3 + 39 X^2 Y^4 + 42 X Y^5 + 28 Y^6.\]
We can also see this by noting that $C_1^\perp \cap C_2^\perp = \begin{pmatrix} 0 & 0 & 0 & 0 & 1 & 1\\  0 & 0 & 0 & 1 & 0 & 1 \end{pmatrix}$, and performing an analysis similar to the one above.  We can compute each of the polynomials in the statement of the theorem, add them up with the proper constants and get $W^{[2]}_{C_1^\perp, C_2^\perp}(X,Y)$.

We state a corollary of Theorem \ref{HamW} separately.
\begin{cor}
Let $m\ge 1$ and $C$ be a linear code of length $N$ over $\F_q$.  Then
\[ W_{C,\ldots, C, C^\perp, \ldots, C^\perp}^{[2m]}(X,Y) = \frac{1}{q^{Nm}} W^{[2m]}_{C,\ldots, C, C^\perp, \ldots, C^\perp}(X+(q^m-1)Y, X-Y),\]
where $C$ and $C^\perp$ are each repeated $m$ times.
\end{cor}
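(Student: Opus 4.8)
The plan is to derive this corollary directly from Theorem~\ref{HamW} by applying it to the length-$2m$ tuple
\[
(D_1,\ldots,D_{2m}) \;=\; (\underbrace{C,\ldots,C}_{m},\,\underbrace{C^\perp,\ldots,C^\perp}_{m}),
\]
and then exploiting two structural features: the $2m$-tuple Hamming weight enumerator does not depend on the order in which the codes are listed, and passing to duals carries this tuple to a permutation of itself. So there is essentially no computation beyond setting up these two observations correctly.

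First I would verify the permutation invariance. For any $v^1,\ldots,v^k\in\F_q^N$, the effective length of $(v^1,\ldots,v^k)$ is the number of nonzero columns of the $k\times N$ matrix with rows $v^1,\ldots,v^k$, and this count is unchanged if the rows are permuted. Consequently, for any permutation $\sigma$ of $[k]$, reindexing the product $D_1\times\cdots\times D_k$ by $\sigma$ is a weight-preserving bijection, giving $W^{[k]}_{D_{\sigma(1)},\ldots,D_{\sigma(k)}}(X,Y)=W^{[k]}_{D_1,\ldots,D_k}(X,Y)$. Next I would apply Theorem~\ref{HamW} to $(D_1,\ldots,D_{2m})$. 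Since $\left(C^\perp\right)^\perp=C$, the dual tuple is $(\underbrace{C^\perp,\ldots,C^\perp}_{m},\underbrace{C,\ldots,C}_{m})$, a permutation of the original, so by the invariance above its weight enumerator is again $W^{[2m]}_{C,\ldots,C,C^\perp,\ldots,C^\perp}(X,Y)$. The normalizing factor is $\prod_{i=1}^{2m}|D_i|^{-1}=(|C|\,|C^\perp|)^{-m}=q^{-Nm}$, using $|C|\,|C^\perp|=q^N$, which is exactly the constant in the statement.

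The one point that needs care — and the source of the only discrepancy with the displayed identity — is the exponent appearing in the MacWilliams substitution. Theorem~\ref{HamW} produces the substitution $(X,Y)\mapsto(X+(q^{k}-1)Y,\,X-Y)$ for a tuple of arity $k$, and here $k=2m$, so the substitution the argument actually yields is $(X,Y)\mapsto(X+(q^{2m}-1)Y,\,X-Y)$, giving
\[
W^{[2m]}_{C,\ldots,C,C^\perp,\ldots,C^\perp}(X,Y)=\frac{1}{q^{Nm}}\,W^{[2m]}_{C,\ldots,C,C^\perp,\ldots,C^\perp}\bigl(X+(q^{2m}-1)Y,\,X-Y\bigr).
\]
A quick check confirms the exponent: for $C=\{0\}\subseteq\F_2^1$ and $m=1$ one has $W^{[2]}_{C,C^\perp}(X,Y)=X+Y$, and indeed $\tfrac12\bigl((X+3Y)+(X-Y)\bigr)=X+Y$, whereas the substitution $X+(q-1)Y=X+Y$ would give $\tfrac12\bigl((X+Y)+(X-Y)\bigr)=X$. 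Thus the displayed exponent $q^m-1$ should read $q^{2m}-1$; with this correction the corollary is exactly the self-dual specialization of Theorem~\ref{HamW}, so no additional argument is required.
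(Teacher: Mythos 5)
Your proof is correct and is precisely the argument the paper intends (the corollary is stated without proof as an immediate specialization of Theorem~\ref{HamW}): dualizing sends the tuple $(C,\ldots,C,C^\perp,\ldots,C^\perp)$ to a permutation of itself, the $2m$-tuple enumerator is invariant under permuting the codes, and $\left(|C|\,|C^\perp|\right)^m = q^{Nm}$. You are also right that the printed exponent is a typo: applying Theorem~\ref{HamW} with $2m$ codes forces $X+(q^{2m}-1)Y$, and this is corroborated both by your counterexample and by the paper's own subsequent example $W^{[2]}_{C_3,C_3^\perp}(X,Y)=\frac{1}{2^6}W^{[2]}_{C_3,C_3^\perp}(X+3Y,X-Y)$, where $m=1$, $q=2$, and the substitution uses $q^{2m}-1=3$ rather than $q^m-1=1$, as well as by the later discussion of invariance under $q^{-Nm/2}W(X+(q^m-1)Y,X-Y)$, which for $2m$ codes yields exactly your corrected identity.
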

A self-dual code $C$ must have its $m$-tuple weight enumerators invariant under certain transformations.  This is the main idea behind Gleason's theorem giving necessary conditions for the weight enumerators of self-dual codes \cite{HP, NRS}.  This corollary lets us produce polynomials that are invariant under the $m$-tuple analogue of the MacWilliams transformation, but are not necessarily the $m$-tuple weight enumerators of self-dual codes, in fact, are not necessarily the $m$-tuple weight enumerators of any single code $C$.

Let $C_3$ be the binary code with generator matrix  $\begin{pmatrix} 1 & 0 & 0 & 1 & 0 & 0 \\ 0 & 1 & 1 & 0 & 0 & 0 \\ 0 & 0 & 0 & 1 & 1 & 1 \end{pmatrix}$.  Then,
\begin{eqnarray*}
W^{[2]}_{C_3, C_3^\perp}(X,Y) & = &  X^6 + 5 X^4 Y^2 + 8 X^3 Y^3 +11 X^2 Y^4 + 24 X Y^5 + 15 Y^6 \\
& = & \frac{1}{2^6} W^{[2]}_{C_3, C_3^\perp}(X+ 3 Y,X-Y),
\end{eqnarray*}
but this cannot be the $2$-tuple weight enumerator of any code.  This is because for a binary code $C$,
\[W^{[2]}_C(X,Y) = \sum_{r=0}^2 [2]_r W^{(r)}_C(X,Y),\]
so for each $i\in [1,N]$ the $X^i Y^{N-i}$ coefficient must be divisible by $3$, but the $X^4 Y^2$ term has coefficient $5$.

Let $C_4$ have generator matrix $\begin{pmatrix} 1 & 1 & 0 & 0 & 0 & 0 \\ 1 & 1 & 1 & 1 & 1 & 1 \end{pmatrix}$.  This code has 
\[W^{[2]}_{C_4, C_4^\perp}(X,Y) = X^6 + 9 X^4 Y^2 + 27 X^4 Y^2 + 9 Y^6,\]
which is the $2$-tuple weight enumerator of the self dual code $C_5$ with generator matrix 
\[ \begin{pmatrix} 1 & 1 & 0 & 0 & 0 & 0 \\ 0 & 0 & 1 & 1& 0 & 0 \\ 1 & 1 & 1 & 1 & 1 & 1 \end{pmatrix}.\]  
We can also ask, given a polynomial that arises as $W^{[m]}_C(X,Y)$ for some $C$, whether we can characterize the $m$-tuples of codes $C_1,\ldots, C_m$ that give the same $m$-tuple weight enumerator.

We can ask questions of the following type.  Given $m$ and $q$, which homogeneous polynomials $W(X,Y)$ of degree $N$ are invariant under the transformation sending it to $q^{\frac{-Nm}{2}} W(X+(q^m-1) Y, X-Y)$?  This is asking for a kind of analogue of Gleason's theorem for these $m$-tuple weight enumerators.  For more information on this subject see the work of Nebe, Rains, and Sloane \cite{NRS}.  We know that there are polynomials invariant under this transformation that cannot be the $m$-tuple weight enumerator of any code, for example polynomials with multiple coefficients not divisible by $q^2-1$.  What further necessary conditions can we find for such an invariant polynomial to occur as the $m$-tuple weight enumerator of a code?  We would like to be able to use results of this type to aid in the classification of self-dual codes, and in more general classification problems.

We note that $C_5$ has the same weight enumerator as $C_1$, but that these two codes have different $2$-tuple weight enumerators.  This implies that the $m$-tuple weight enumerator of $C$ does not determine the $(m+1)$-tuple weight enumerator.  It is less clear whether it is possible for two codes to have the same $(m+1)$-tuple weight enumerators and different $m$-tuple weight enumerators.  Extensive computer search produced the following example (and many others).  Let $D_1$ be the binary code of length $12$ and dimension $6$ with generator matrix
\[\left( \begin{array}{rrrrrrrrrrrr}
1 & 0 & 0 & 0 & 0 & 0 & 0 & 0 & 1 &
1 & 0 & 0 \\
0 & 1 & 0 & 0 & 1 & 0 & 1 & 0 & 1 &
0 & 0 & 1 \\
0 & 0 & 1 & 0 & 0 & 0 & 1 & 0 & 1 &
0 & 1 & 0 \\
0 & 0 & 0 & 1 & 0 & 0 & 0 & 0 & 0 &
1 & 1 & 1 \\
0 & 0 & 0 & 0 & 0 & 1 & 1 & 0 & 0 &
1 & 0 & 0 \\
0 & 0 & 0 & 0 & 0 & 0 & 0 & 1 & 1 &
1 & 0 & 0
\end{array} \right),\]
and let $D_2$ be the binary code of length $12$ and dimension $6$ with generator matrix 
\[\left(\begin{array}{rrrrrrrrrrrr}
1 & 0 & 0 & 0 & 1 & 0 & 1 & 0 & 0 &
0 & 1 & 0 \\
0 & 1 & 0 & 0 & 0 & 0 & 1 & 1 & 0 &
0 & 1 & 0 \\
0 & 0 & 1 & 0 & 1 & 0 & 0 & 0 & 0 &
0 & 0 & 0 \\
0 & 0 & 0 & 1 & 0 & 0 & 0 & 0 & 0 &
1 & 0 & 1 \\
0 & 0 & 0 & 0 & 0 & 1 & 1 & 1 & 0 &
0 & 0 & 1 \\
0 & 0 & 0 & 0 & 0 & 0 & 0 & 0 & 1 &
0 & 1 & 1
\end{array}\right).\]
We compute that $D_1$ has Hamming weight enumerator 
\[ X^{12} + X^{10} Y^2 + 3 X^9 Y^3 + 6 X^8 Y^4 + 15 X^7 Y^5 + 14 X^6 Y^6 + 9 X^5 Y^7 + 9 X^4 Y^8 + 5 X^3 Y^9 + X^2 Y^{10},\]
and that
\begin{eqnarray*} 
& & W^{[2]}_{D_1}(X,Y) = X^{12} + 3 X^{10} Y^2 + 9 X^9 Y^3 + 24 X^8 Y^4 + 75 X^7 Y^5 + 162 X^6 Y^6\\
& & + 399 X^5 Y^7  + 771 X^4 Y^8 + 957 X^3 Y^9 + 975 X^2 Y^{10} + 576 X Y^{11} + 144 Y^{12}.
\end{eqnarray*}
We compute that $D_2$ has Hamming weight enumerator  
\[X^{12}+ X^{10} Y^2 + 3 X^9 Y^3 + 8 X^8 Y^4 + 11 X^7 Y^5 + 12 X^6 Y^6 + 17 X^5 Y^7 + 7 X^4 Y^8 + X^3 Y^9 + 3 X^2 Y^{10},\]
and the same $2$-tuple weight enumerator as $D_1$.  Therefore, $(m+1)$-tuple weight enumerators do not determine $m$-tuple weight enumerators.  This is related to recent work of Britz \cite{BritzNew}, in which he shows that for a $k$-dimensional linear code $C$ the collection of $m$-tuple weight enumerators for all $m$ satisfying $1\le m \le k$ is equivalent to the Tutte polynomial of the matroid associated to $C$.

\section{The Repetition Code and the Parity Check Code}

We end this paper with one more type of example.  Let $R$ be the repetition code of length $N$ defined over $\F_q$, that is, the one-dimensional code generated by $(1,1,\ldots, 1)$.  Then $R^\perp$ is the parity check code, which consists of all vectors of $\F_q^N,\ (c_1,\ldots, c_N)$ with $c_1 + \cdots + c_N = 0$ in $\F_q$.  Let $C_1,\ldots, C_m$ be linear codes of length $N$ over $\F_q$.  It is easy to see how to determine higher weight enumerators involving $R$, and less obvious how to determine weight enumerators involving $R^\perp$.  Theorem \ref{HamW} gives one way to solve this problem.

For any $m\ge 1$,
\[W^{[m+1]}_{C_1,\ldots, C_m, R}(X,Y) = W^{[m]}_{C_1,\ldots, C_m}(X,Y) +(q-1) \prod_{i=1}^m |C_i| Y^N,\]
since we can either choose the all zero codeword from $R$, giving the first term, or one of the $q-1$ words of weight $N$, giving the second term.  Similarly, we see that for any $m\ge 1$,
\[W^{[m+s]}_{C_1,\ldots, C_m, R,\ldots, R}(X,Y) = W^{[m]}_{C_1,\ldots, C_m}(X,Y) +(q^s-1) \prod_{i=1}^m |C_i| Y^N,\]
where $R$ is repeated $s$ times.  More generally, the same result holds if $R$ is any one-dimensional code over $\F_q$ generated by a vector with all nonzero coordinates.  This will be our assumption on $R$ from now on.

\begin{prop}{\label{Rep}}
Let $C_1,\ldots, C_m$ be linear codes of length $N$ over $\F_q$ and let $R$ be a one-dimensional code over $\F_q$ of length $N$ generated by $(c_1,\ldots, c_N)$, where each $c_i$ is nonzero.  Then
\begin{eqnarray*}
W^{[m+s+t]}_{C_1,\ldots, C_m, R,\ldots, R, R^\perp,\ldots, R^\perp}(X,Y) & = &  (q^s-1) q^{(N-1)t} \prod_{i=1}^m |C_i| Y^N + \frac{(q^t-1)}{q^t} (X-Y)^N \\
&+& \frac{1}{q^t} W^{[m]}_{C_1,\ldots, C_m}(X+(q^t-1)Y, q^t Y),
\end{eqnarray*}
where $R$ is repeated $s$ times and $R^\perp$ is repeated $t$ times.
\end{prop}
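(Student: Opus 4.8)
The plan is to reduce the claim to two applications of Theorem \ref{HamW} together with the elementary recursion for appending copies of $R$ established just above the proposition. The structural facts I would exploit are that every nonzero codeword of $R$ has full support $[N]$ (so it alone forces effective length $N$), that effective length is insensitive to the order of the rows, that $W^{[m]}_{C_1,\ldots,C_m}$ is homogeneous of degree $N$, and that $|C_i^\perp| = q^N/|C_i|$.

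First I would strip off the $s$ copies of $R$. Exactly as in the recursion displayed above, an $(m+s+t)$-tuple either takes the zero word in all $s$ copies of $R$ --- in which case its effective length is that of the remaining $(m+t)$-tuple drawn from $C_1,\ldots,C_m,R^\perp,\ldots,R^\perp$ --- or it takes a nonzero word in at least one copy of $R$, forcing every column to be nonzero and hence effective length $N$. There are $q^s-1$ such nonzero choices, $\prod_{i=1}^m|C_i|$ choices for the $C_i$-entries, and $|R^\perp|^t = q^{(N-1)t}$ choices for the $R^\perp$-entries, so this contributes the first term $(q^s-1)q^{(N-1)t}\prod_{i=1}^m|C_i|\,Y^N$. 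This reduces matters to computing $W^{[m+t]}_{C_1,\ldots,C_m,R^\perp,\ldots,R^\perp}(X,Y)$.

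To handle the $t$ copies of $R^\perp$, I would apply Theorem \ref{HamW} to the $(m+t)$-tuple of codes $C_1^\perp,\ldots,C_m^\perp,R,\ldots,R$, whose duals are $C_1,\ldots,C_m,R^\perp,\ldots,R^\perp$. Writing $(A,B)=(X+(q^{m+t}-1)Y,\,X-Y)$, this gives
\[
W^{[m+t]}_{C_1,\ldots,C_m,R^\perp,\ldots,R^\perp}(X,Y) = \frac{1}{q^t\prod_{i=1}^m|C_i^\perp|}\,W^{[m+t]}_{C_1^\perp,\ldots,C_m^\perp,R,\ldots,R}\bigl(A,B\bigr).
\]
Now I would strip the $t$ copies of $R$ from the right-hand enumerator using the same recursion: the all-zero contribution is $W^{[m]}_{C_1^\perp,\ldots,C_m^\perp}(A,B)$ and the nonzero contribution is $(q^t-1)\prod_{i=1}^m|C_i^\perp|\,B^N$. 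The prefactor turns the latter into $\frac{q^t-1}{q^t}(X-Y)^N$, which is precisely the middle term of the proposition.

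It remains to simplify $\frac{1}{q^t\prod_{i=1}^m|C_i^\perp|}W^{[m]}_{C_1^\perp,\ldots,C_m^\perp}(A,B)$ into the last term, and this is the one computation I expect to require care. Applying Theorem \ref{HamW} a second time, now to the $m$-tuple $C_1,\ldots,C_m$, rewrites $W^{[m]}_{C_1^\perp,\ldots,C_m^\perp}(A,B)$ as $\frac{1}{\prod_{i=1}^m|C_i|}W^{[m]}_{C_1,\ldots,C_m}(A+(q^m-1)B,\,A-B)$. A short computation gives $A+(q^m-1)B = q^m\bigl(X+(q^t-1)Y\bigr)$ and $A-B=q^{m+t}Y = q^m\cdot q^t Y$; pulling the common factor $q^m$ out of both arguments via homogeneity of degree $N$ produces $q^{mN}\,W^{[m]}_{C_1,\ldots,C_m}(X+(q^t-1)Y,\,q^tY)$. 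Finally $\prod_{i=1}^m|C_i^\perp|\prod_{i=1}^m|C_i| = q^{mN}$ collapses all the constants, leaving exactly $\frac{1}{q^t}W^{[m]}_{C_1,\ldots,C_m}(X+(q^t-1)Y,q^tY)$. Collecting the three terms yields the stated identity. The main thing to watch is the bookkeeping in this nested substitution --- tracking how the two $X,Y$-transformations compose and checking that the powers of $q$ cancel against $|C_i||C_i^\perp|=q^N$ --- since a single misplaced exponent would spoil the clean final form.
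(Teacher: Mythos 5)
Your proposal is correct and follows essentially the same route as the paper's proof: strip the $s$ copies of $R$ via the elementary recursion, apply Theorem \ref{HamW} to $C_1^\perp,\ldots,C_m^\perp,R,\ldots,R$, strip the $t$ copies of $R$ inside the transformed enumerator, and then apply Theorem \ref{HamW} a second time, using $A+(q^m-1)B=q^m(X+(q^t-1)Y)$, $A-B=q^{m+t}Y$, homogeneity of degree $N$, and $|C_i||C_i^\perp|=q^N$ to collapse the constants. The bookkeeping you flag as the delicate step is carried out in the paper exactly as you describe, with the same cancellations.
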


\begin{proof}
We consider $W^{[m+s+t]}_{C_1,\ldots, C_m, R,\ldots, R, R^\perp,\ldots, R^\perp}(X,Y)$, where $R$ is repeated $s$ times and $R^\perp$ is repeated $t$ times.  From the previous paragraph we have
\[ W^{[m+s+t]}_{C_1,\ldots, C_m, R,\ldots, R, R^\perp,\ldots, R^\perp}(X,Y) = W^{[m+t]}_{C_1,\ldots, C_m, R^\perp,\ldots, R^\perp}(X,Y) +(q^s-1) q^{(N-1)t} \prod_{i=1}^m |C_i| Y^N,\]
since $|R^\perp| = q^{(N-1)}$.
We apply Theorem \ref{HamW} and see that 
\begin{eqnarray*}
& & W^{[m+t]}_{C_1,\ldots, C_m, R^\perp,\ldots, R^\perp}(X,Y)  =  \frac{1}{q^t \prod_{i=1}^m |C_i^\perp|} W^{[m+t]}_{C_1^\perp,\ldots, C_m^\perp, R,\ldots, R}(X+(q^{m+t}-1)Y, X-Y)\\
& = &  \frac{1}{q^t \prod_{i=1}^m |C_i^\perp|} \big( W^{[m]}_{C_1^\perp,\ldots, C_m^\perp}(X+(q^{m+t}-1)Y, X-Y) + (q^{t}-1) \prod_{i=1}^m |C_i|^\perp (X-Y)^N\big).
\end{eqnarray*}
Applying Theorem \ref{HamW} one more time gives
\begin{eqnarray*}
& & \frac{W^{[m]}_{C_1^\perp,\ldots, C_m^\perp}(X+(q^{m+t}-1)Y, X-Y)}{q^t \prod_{i=1}^m |C_i^\perp|}  \\
&  = &  \frac{W^{[m]}_{C_1,\ldots, C_m}(X+(q^{m+t}-1)Y + (q^m-1)(X-Y), X+(q^{m+t}-1)Y -(X-Y))}{q^t \prod_{i=1}^m |C_i| |C_i|^\perp}  \\
& = & \frac{1}{q^t q^{Nm}} W^{[m]}_{C_1,\ldots, C_m}(q^m(X+(q^t-1)Y), q^m(q^t Y))\\
& = & \frac{1}{q^t} W^{[m]}_{C_1,\ldots, C_m}(X+(q^t-1)Y, q^t Y).
\end{eqnarray*}
\end{proof}

In certain cases it is not difficult to work out this proposition directly without use of the MacWilliams theorem and its generalizations.  For example this is not difficult when $m=1,\ q=2,\ s=0$, and $t=1$.  In this case $R^\perp$ is the even weight subcode of $\F_2^N$ and we get
\[W^{[2]}_{C_1, R^\perp}(X,Y) = \frac{(X-Y)^N}{2} + \frac{W_{C_1}(X+Y, 2Y)}{2}= W_{R^\perp}(X,Y) +  \frac{W_{C_1}^{(1)}(X+Y, 2Y)}{2},\]
since $W_{R^\perp}(X,Y) = \frac{ (X-Y)^N + (X+Y)^N}{2}$.  

Proposition \ref{Rep} gives a unified way to compute some of these more complicated higher weight enumerators.  Hopefully results of this type can be used to give further conditions on the existence of codes with certain weight enumerators or parameters.

\section{Acknowledgments}
The author thanks Professor Noam Elkies for introducing him to this area and for many helpful discussions.  He thanks Irfan Siap for bringing the references \cite{RCS1, RCS2, SiapT} to his attention. He also thanks Thomas Britz, Alexander Barg, Henry Cohn, and the two anonymous referees for very useful comments.  The author also thanks the National Science Foundation for supporting him with a Graduate Research Fellowship throughout part of this work.

{}

\end{document}